\newcolumntype{Y}{>{\centering\arraybackslash}X}
\def\bs {\mathbf}
\def\bs{\boldsymbol}
\def\t{^{\top}}
\def\bs {\mathbf}
\def\bs{\boldsymbol}
\def\t{^{\top}}
\newtheorem{theorem}{Theorem}
\newtheorem{proposition}{Proposition}
\newtheorem{example}{Example}
\newtheorem{definition}{Definition}
\newcounter{bibcount}
\patchcmd{\@lbibitem}{\item[}{\item[\hfil\stepcounter{bibcount}{[\thebibcount]}}{}{}
\renewcommand\NAT@bibsetup%
\newcommand{\R}{\mathbb{R}}
\newtheorem{assumption}{Assumption}
\newcolumntype{L}[1]{>{\raggedright\let\newline\\arraybackslash\hspace{0pt}}m{#1}}
\newcolumntype{C}[1]{>{\centering\let\newline\\arraybackslash\hspace{0pt}}m{#1}}
\newcolumntype{R}[1]{>{\raggedleft\let\newline\\arraybackslash\hspace{0pt}}m{#1}}
\begin{document}

\newgeometry{top=1.4in, bottom=1.4in, left=1.25in, right=1.25in}
\begin{titlepage}
  \title{\vspace*{-3cm}
\textbf{Conditional Choice Probability Estimation of Dynamic Discrete Choice Models with 2-period Finite Dependence}\thanks{We are very grateful for the comments from Vadim Marmer, Kevin Song and Florian Hoffmann.}
 }
  \author{
  	Yu Hao \footnote{\href{mailto: haoyu@hku.hk} {haoyu@hku.hk} Faculty of Business and Economics, The University of Hong Kong.} \\ \emph{The University of Hong Kong}
  \and
  Hiroyuki Kasahara \footnote{ \href{mailto: hkasahar@mail.ubc.ca}{hkasahar@mail.ubc.ca}Vancouver School of Economics, The University of British Columbia} 
  \\ \emph{The University of British Columbia}
} \date{ \today}\maketitle
  \begin{abstract}
%   \textbf{[Needs to be updated]}
% This paper presents a class of estimators suitable for models with or without finite dependence, offering the same time efficiency as the finite dependency estimator by \cite{arcidiacono_miller_2019nonstationary}. We establish the agent's optimization problem's first-order necessary conditions and characterize the current continuation value function as future values discounted by a weighted forward transition density with an arbitrarily chosen weight. We show that the expected value of future utilities resulting from optimal decision making can always be expressed as a function of flow payoffs and conditional choice probabilities (CCPs) for any sequence of future choices, regardless of the optimality of the decisions. We propose a finite-dependent (AFD) estimator using this characterization and demonstrate its performance and computational gains through Monte Carlo simulations.

This paper extends the work of \cite{arcidiacono2011conditional,arcidiacono_miller_2019nonstationary} by introducing a novel characterization of finite dependence within dynamic discrete choice models, demonstrating that numerous models display 2-period finite dependence.  We recast finite dependence as a problem of searching for weights sequentially, and introduce an computationally efficient method for determining these weights by utilizing the Kronecker product structure embedded in state transitions.   With the estimated weight, we develop a computationally attractive Conditional Choice Probability estimator with 2-period finite dependence. The computational efficacy of our proposed estimator is demonstrated through Monte Carlo simulations. %The estimator offers significant advantages in nonstationary contexts by bypassing the difficult task of calculating future value functions, which poses major challenges in these environments.

\noindent \textbf{Keywords:} Dynamic Discrete Choice Models, Finite Dependence Characterization, Conditional Choice Probability Estimator,  Computational Efficiency.

\noindent \textbf{JEL Codes:} C15, C25, C57, C61, C63.

\end{abstract}

  \setcounter{page}{0}
  \thispagestyle{empty}
\end{titlepage}

%%%%%%%%%%%%%%%%%%%%%%%%%%%%%%%%%%%%%%%%%%%%%%%%%%%%%%%%%%%%%%%%%%%%%%%%%
%%%% Main text entry area:
%%%%%%%%%%%%%%%%%%%%%%%%%%%%%%%%%%%%%%%%%%%%%%%%%%%%%%%%%%%%%%%%%%%%%%%%%

\section{Introduction}
%Many decisions are made in a forward-looking way. For example, a firm may make an entry decision in a market with regard to all future rewards and entry costs. When a household makes consumption or savings decisions, all future consumptions are considered.
To model the forward-looking nature of decision-making, economists often use \textit{ dynamic discrete choice models} (DDC) in many areas of economic research.\footnote{
	For example, in industrial organizations, \cite{Berry2006, Aguirregabiria2012, Sweeting2013, CollardWexler2013, Ellickson2015, Yakovlev2016}; in health economics, \cite{Gowrisankaran1997, Gowrisankaran2011, Gaynor2012, Beauchamp2015}; in the marketing,\cite{Dube2005, Doganoglu2006, Doraszelski2007}; and in labor economics, researchers use dynamic discrete choice to model the forward-looking nature of employment choice, school choice, and mammography choices \cite{Todd2006, Keane2011, Fang2015}.}
The paper by \cite{Rust1987, Rust1994} pioneered the estimation of dynamic discrete choice (DDC) models by identifying optimal decision rules and modeling future events' expectations. However, standard maximum likelihood estimation (MLE) for infinite-horizon discrete games can be computationally expensive due to the need for solving fixed-point problems in large state spaces. To address this issue, \cite{Hotz1993} proposed a computationally efficient method that represents the value function as a product of inverse conditional choice probabilities (CCPs)-weighted transition matrices and expected CCPs-weighted payoffs. \cite{aguirregabiria2007sequential} proposed a recursive algorithm that implements the Hotz-Miller estimators in each iteration.\footnote{Empirical researchers have used dynamic games to model and analyze models with \emph{infinite dependence}, including oligopolistic competition \cite{estimation_jeziorski_2014, unobserved_igami_2016, maican2018entry, salz2020estimating}, dynamic pricing decisions \citep{ellickson2012repositioning}, salesforce compensation dynamics \citep{bonuses_chung_2014, misra2011structural}, migration \citep{bishop2012dynamic, coate2013parental, ma2019learning},  housing market outcomes \citep{economics_kuminoff_2013, dynamic_murphy_2018}, school matching outcomes \citep{cupids_galichon_2015}, preschool education \cite{intergenerational_heckman_2016}, female labor supply \cite{Altuug1998, gayle2012, gayle2015accounts}, mammography choice \cite{Fang2015}, and kidney donations \cite{agarwal2021equilibrium}.}

In the CCP estimator of \cite{Hotz1993} that do not rely on finite dependence, the costs of inversion of the matrix or solving value functions recursively through forward simulation grow significantly with the size of the state space. When the DDC models exhibit finite dependence,  computing the value differences between decisions only requires a summation of the payoffs over \textit{finite} periods. Consequently, the CCP estimator that exploit finite dependence do not require an inversion of large matrices, leading to  substantial computational gains.\footnote{\cite{arcidiacono2011conditional} proposes an estimation based on the EM algorithm to address unobservable heterogeneity in the DDC models that exhibits finite dependence. }

% Economists have utilized the \emph{finite dependence} property to tackle the curse of dimensionality, where the costs of inversion of the matrix or solving value functions recursively through forward simulation grow significantly with the size of the state space. The \emph{finite dependence property } allows continuation value differences between decisions to be expressed as a summation of the payoffs of finite periods. \cite{arcidiacono2011conditional} discusses \emph{finite dependence} and proposes a class of estimators based on it, demonstrating their computational benefits. These estimators are time-efficient when estimating recursive models.
% \footnote{\emph{Finite dependence} estimators provide substantial computational gains, especially in models incorporating unobservable heterogeneity and the EM algorithm. \cite{arcidiacono2011conditional} proposes an estimator combining the EM algorithm and \emph{ the finite dependence} to address unobservable heterogeneity in dynamic discrete choice problems. }
A significant body of empirical research has applied the concept of finite dependence for estimation, often for a model with a terminal or absorbing state, as seen in studies across various fields such as fertility and labor supply, migration, stock market engagement, agricultural land use, smoking habits, education, occupational choice, housing decisions, and trade \citep{joensen2009academic,Altuug1998,gayle2012,bishop2012dynamic,coate2013parental,ma2019learning,ransom2022labor,khorunzhina2013structural,dynamic_scott_2014,matsumoto2014lighting,arcidiacono2016estimation,khorunzhina2017american,traiberman19aer}.  \cite{declercq2018enrollment}, \cite{mazur2014can}, and \cite{Beauchamp2015} utilize one-period finite dependence to estimate dynamic games.  These studies underscore the practicality of exploiting finite dependence in estimations, circumventing the need for computing value functions within nested fixed-point algorithms or inverting large matrices aligned with the state space. 

A previously known class of models that exhibit finite dependence is limited, however. \cite{arcidiacono2011conditional} discuss the sufficient condition for a model to display the \emph{finite dependence}, which arises when there is a terminal or absorbing state. %Upon reaching this state, previous actions become irrelevant, representing future payoffs as the sum of finite-period payoffs.
%Determining whether certain models have the property of \emph{finite dependence} is not easy.
\cite{arcidiacono_miller_2019nonstationary} generalize the definition of finite dependence by introducing the weights across the conditional value functions of all alternatives and  develops an algorithm to calculate the finite dependence paths. Although their approach is valuable for determining whether models exhibit finite dependence, their algorithm is more computationally costly compared to our proposed method. Additionally, the range of models that can be demonstrated to exhibit finite dependence within a few periods using their technique remains limited.

%The class of models with finite dependence 
%For example, when a firm decides to enter or exit a market, the \emph{finite dependence} property assumes that the firm's past entry history in that market has no effect on its subsequent entry decision. However, this assumption may not always hold true. 
 %Prior to the development of this algorithm, it was considered a myth to find the finite dependence property.

In this paper, extending the work of \cite{arcidiacono2011conditional,arcidiacono_miller_2019nonstationary}, we further generalize the definition of finite dependence while recasting the property of finite dependence as a problem of sequential weight determination. Most importantly, we show that numerous models display 2-period finite dependence under this generalized finite dependence, including a set of models that had been previously viewed as non-finite dependent due to stochastic state transitions given choices. Consequently, we may apply our computationally efficient CCP estimation method to a variety of the DDC models. 
We demonstrate, via Monte Carlo simulations, that our approach has a large computational gain over the CCP estimator based on the Hotz-Miller inversion.

\cite{arcidiacono_miller_2019nonstationary} highlight that the ex ante value function can be represented as a weighted average of the conditional value functions for all alternatives, augmented by a function tied to the conditional choice probabilities, with the total of all weights equating to one. Leveraging this concept, they suggest a sequential algorithm for identifying weights that could achieve finite dependence, though their method determines future weights solely based on future states, disregarding the current state. We expand upon this by proposing a broader definition of finite dependence, allowing weights to depending on both future and current states. Our approach introduces more flexibility in attaining finite dependence, enabling the identification of weights that maintain the future state's probability distribution unchanged after a small number of periods.

% Conceptually, the definition of finite dependence in \cite{arcidiacono_miller_2019nonstationary} assume that 
% the agent will undertake a stochastic path of actions in future states independent of the current state. Our generalized definition of finite dependence allows the future action plan to be contingent also on the current state.  This extension implies that we have more degree of freedom to make contingent plans to achieve finite dependency by finding the action plans under which the future probability distribution of the state after $\rho$ periods unchanged. 

% To characterize the weights, we show that, if a weight vector exists that leads to a zero norm for the weighted forward transition matrix, the model exhibits finite dependence. Based on this characterization, we propose a systematic approach to identify such weight vectors and present computational evidence that the norm of the weighted forward transition matrix approaches zero when applying the method to a period forward $(\rho+1)$. In particular, we demonstrate that a model exhibits 2-period finite dependence as long as the stochastic state transition depends on the choice taken and the state space has been discretized. Because many of the structural dynamic discrete choice models in economics exhibit this feature, our proposed estimation method can be applied to a wide set of empirical economic models.

To determine the weights, we show that the existence of a weight vector resulting in a zero norm for the weighted forward transition matrix implies finite dependence in the model. Utilizing this insight, we develop a systematic approach for identifying such weight vectors, providing computational evidence that the norm of the weighted forward transition matrix becomes nears zero when our approach is applied one period forward. Mostly importantly, we show that a model displays 2-period finite dependence if the stochastic state transition is influenced by the chosen action and the state space is discretized. Given that many structural dynamic discrete choice models in economics share this characteristic, our estimation method is broadly applicable to a wide set of empirical economic models.

% Furthermore, we develop an easy-to-implement computational algorithm for weight searching by utilizing the Kronecker product structure inherent in state transitions. In fact, when we compare the computational time of our algorithm to the algorithm considered by \cite{arcidiacono_miller_2019nonstationary}, our proposed algorithm to find the weights is significantly faster than that of \cite{arcidiacono_miller_2019nonstationary}. With the estimated weight, we develop CCP estimators similar to the finite-dependence estimator described by \cite{arcidiacono2011conditional}. The class of estimators based on the finite dependence offers a large computational time advantage over the CCP estimator based on the Hotz-Miller matrix inversion because the evaluation of conditional choice probabilities for finite dependence require only computing a few periods ahead utility differences rather than infinitely many periods. 

 We also introduce a straightforward computational algorithm for weight discovery that takes advantage of the Kronecker product structure present in state transitions. Our algorithm demonstrates a significant reduction in computational time compared to the algorithm presented by \cite{arcidiacono_miller_2019nonstationary}. Leveraging the estimated weights, we formulate CCP estimators akin to the finite-dependence estimator outlined by \cite{arcidiacono2011conditional}. 
 
 Our proposed estimation method under finite dependence substantially cuts down on computational time compared to the traditional CCP estimator based on Hotz-Miller's matrix inversion approach. This efficiency gain stems from the finite dependence approach requiring the calculation of utility differences for only a limited number of future periods, as opposed to an infinitely  many periods.  As emphasized by \cite{arcidiacono_miller_2019nonstationary}, our estimator also offers distinct advantages in non-stationary environments by avoiding the need to compute future value functions, a typically challenging task when stationarity is absent.

The remainder of this paper is structured as follows.
In Section \ref{sec:2_model}, we detail the dynamic discrete choice model and the framework that underpins its estimation. Section \ref{sec:3_weight} introduces a methodological approach for the identification of optimal weight vectors. Section \ref{sec:4_estimator} is dedicated to the exposition of the estimator, which is derived under specific characterizations.
The results of the Monte Carlo simulations are presented in Section \ref{sec:5_MCsimulation}.
Lastly, Section \ref{sec:6_game} extends the presented results to the context of a dynamic game.

\section{Dynamic discrete choice model}\label{sec:2_model}
\subsection{Baseline model}\label{sec:dynamic_discrete_choice_problem} 
  
An agent faces a decision-making problem where they must choose one of $D+1$ discrete, mutually exclusive actions each period, denoted as $d_t\in \mathcal{D} := \{0,1,2,\ldots, D\}$. The goal of the agent is to maximize their expected discounted sum of utilities over a time horizon $T$, which can either be finite or infinite. This objective is captured by the expression:

$$
\mathbbm{E} \left[ \sum_{s=0}^{T}\beta^{s} \left[ u_{t+s}(x_{t+s}, d_{t+s}) + \epsilon_{t+s}(d_{t+s}) \right] \mid d_{t},x_{t}\right],
$$
where $\beta \in (0,1)$ is the discount factor that weighs future utilities, $u_t(x_t, d_t)$ is the deterministic component of the time-separable payoff at time $t$, and $\epsilon_t(d_t)$ is a choice-specific idiosyncratic shock, reflecting the stochastic elements of the payoff. The state variable at time \(t\) is denoted by \(s_t = (x_t, \boldsymbol{\epsilon}_t)\), where \(x_t\) is a state variable observable by a researcher, while \(\boldsymbol{\epsilon}_t\) is a state variable that is unobservable to the researcher. The expectation \(\mathbb{E}\) is taken over the future values of state variables and shocks $\{x_{t+\tau}, \boldsymbol{\epsilon}_{t+\tau} \}_{\tau=1}^T$. The agent's decision at each period is influenced by both the current state \(s_t\) and the anticipated future states and shocks.

We adopt the following conventional assumptions in  dynamic discrete choice modeling \cite[c.f.][]{Rust1987, Rust1994, magnac2002identifying}: % which are standard in  dynamic discrete choice modeling, such as those illustrated by \cite

\begin{assumption}[Observed State Variables]\label{assumption:discrete_support}
The observed state variables $x$ have a discrete and finite support, defined as $\mathcal{X} = \{ 1,\ldots, X\}$, where $X$ is a finite cardinal number.
\end{assumption}

\begin{assumption}[Conditional Independence]\label{assumption:conditional_independence}
The transition probability density function $p_t(s_{t+1}|d_t,s_t)$ decomposes into $p_t(s_{t+1}|d_t,s_t) = g_t(\epsilon_{t+1}|x_{t+1}) f_t(x_{t+1} |d_t, x_t)$. This implies that the future state variables' transition is independent of current unobserved state given the current action and observed state. Moreover, the distribution of unobserved state variables is i.i.d. over time.
\end{assumption}

\begin{assumption}[Additive Separability]\label{assumption:additive_seperable}
The unobserved state variables are additively and separately integrated into the utility function: $U_t(d_t,x_t, \epsilon_t) = u_t(x_t,d_t) + \epsilon_t(d_t)$, where $\epsilon_t(d)$ denotes the component of the unobserved state vector $\epsilon_t = \{ \epsilon_t(d): d \in \mathcal{D} \}$ corresponding to decision $d$.
\end{assumption}

%These assumptions enable us to structure the likelihood of observing a sequence of actions and states in a way that simplifies the estimation of the model's parameters.

Under Assumption \ref{assumption:conditional_independence} to Assumption \ref{assumption:additive_seperable}, we define the integrated value function $  V(x_t)$ by integrating the optimal choice of the agents under the Bellman equations over the private information state variable:
\begin{equation}\label{eq:bellman}
	  V_t(x_t) = \int \max_{d \in \mathcal{D}} \left\{   u_t (d,x_t) + \epsilon_t(d) + \beta \sum_{x_{t+1} \in \mathcal{X}}   V_t(x_{t+1}) f(x_{t+1}|d, x_t)\right\} g_t(\epsilon_t|x_t)d\epsilon_t.
\end{equation}
Under the above assumptions, the integrated value function $\mathbf V := \{  V_t(x): x \in \mathcal{X} \} \in \mathcal{B}_V \subset \R^{X}$ is a vector of size $X$. 
%Therefore, the computational complexity increases with the size of the state space $X$. 

% Choice-dependent value function
We define the choice-specific value function $  v_t(z_t,d_t)$ as the discounted sum of future values of choosing alternative $d_t$: 
\begin{equation}%\label{eq:value_function}
	  v_t(z_t,d_t) :=   u_t(z_t,d_t) + \beta \sum_{x_{t+1} \in \mathcal{X}}   V_t(x_{t+1}) f_t(x_{t+1}|d_t, x_t).
\end{equation}

Taking the choice option $d = 0$ as the baseline option, we define the value differences as $\tilde{  v}_t(z_t,d_t) =   v_t(z_t,d_t) -    v_t(z_t,d_t)$ and collect them as $\tilde{\boldsymbol  v}_t(x_t) = \{  {v}_t(d,x_t): d \in \mathcal{D} \backslash \{ 0\} \} \in \R^{D}$.
The conditional choice probabilities of the action $d_t$ given the state $x_t$  are defined as
\begin{equation}\begin{split}
		p_t(d_t, x_t) & =
		\boldsymbol {\Lambda}_t(\tilde{\boldsymbol  v}_t(x_t))(d_t,x_t) \\ 
		& = \int \mathbf{1} \{ \tilde {  v}_t(z_t,d_t) + \epsilon_t(d)  \ge \tilde {  v}_t(d',x_t) + \epsilon(d')  ~ \forall ~ d' \in \mathcal{D} \} d G_t(\epsilon)\quad\text{for all d' }.
	\end{split}
	\label{eq:ocp_mapping}
\end{equation}
We refer to $\boldsymbol {\Lambda}(\tilde{\boldsymbol  v}_t(x_t))(d_t,x_t)$ as the optimal conditional choice probability (OCP) mapping. 
We collect the leave-one-out conditional choice probabilities into a vector as $\boldsymbol  p_t(x_t) = \{ p_t(d,x_t) :  d \in \mathcal{D} \backslash \{ 0\} \} \in \R^{D}$.
If the agent $i$ makes an optimal choice for each current state $(x_t, \epsilon_t)$ and all future states, then $\boldsymbol  p_t(x_t) = \boldsymbol {\Lambda}_t(\tilde{\boldsymbol  v}_t(x_t))$.

Following the result of \citet[][Proposition 1]{Hotz1993}, we introduce the inversion of the OCP mapping. 
\begin{proposition}[Hotz-Miller Inversion]\label{prop:HM} Under Assumption \ref{assumption:conditional_independence} - \ref{assumption:additive_seperable},
	for any vector of differences in choice-dependent value functions $\tilde{\boldsymbol  v}_t(x) \in \R^{D}$, the OCP mapping is invertible, such that there is a one-to-one relationship between the vector of value differences and the vector of conditional choice probabilities, that is, $\tilde{\boldsymbol  v}_t(x) =\boldsymbol {\Lambda}_t^{-1}(\boldsymbol  p_t(x))$.
\end{proposition}

\subsection{Finite dependence property}\label{sec:finite_dependence}
 
We now derive a representation of the choice-specific value functions based on the analysis of \cite{arcidiacono2011conditional} and \cite{arcidiacono_miller_2019nonstationary}. The following theorem shows that  the difference between the value function $V_t(x_t)$ and the conditional value function $  v_t(x_t, d_t)$ is expressed as a function of the conditional choice probabilities. 
\begin{theorem}[Lemma 1 of \cite{arcidiacono2011conditional}]\label{thm:AM-lemma1}
There exists a real-valued function $\psi_d(\boldsymbol  p)$ for every  $d\in \mathcal{D}$ such that
\begin{equation}\label{eq:psi}
 V_t(x_t)  = \psi_{d}(\boldsymbol  p_t(x_t)) +  v_t(x_t,d).
\end{equation}
\end{theorem}  

The function $\psi_d(\boldsymbol  p_t)$ has an analytical expression or is straightforward to evaluate when $\epsilon_t$ follows the Generalized Extreme Value (GEV) distribution, of which special cases include T1EV distribution and nested logit \citep[Section 3.3 of][]{arcidiacono2011conditional}.  Moreover, the mass transport approach by \cite{chiong2016duality} can be used to evaluate  $\psi_d$ for general distributions beyond the GEV. 

For each triple $(x_t,x_{t+1},d_t)\in \mathcal{X}^2\times\mathcal{D}$, consider a  vector of weights $$\mathbf{w}_{t+1}(x_{t+1}|x_t, d_t):= ( \mathrm{w}_{t+1}(x_{t+1}, 0|x_t, d_t),...,\mathrm{w}_{t+1}(x_{t+1},D|x_t, d_t) )\t$$ that satisfy $\sum_{d'\in \mathcal{D}} \mathrm{w}_{t+1}(x_{t+1}, d'|x_t, d_t) =1$ and $|| \mathbf{w}_{t+1}(x_{t+1}|x_t, d_t)||<\infty$. Evaluating equation (\ref{eq:psi}) at  $t+1$, substituting equation (\ref{eq:value_function}) at $t+1$ into the resulting equation, and taking their weighted averages across $D$ choices give   
\begin{align}\label{eq:weighted}
    V_{t+1}(x_{t+1})&=  \sum_{d' \in \mathcal{D}} \mathrm{w}_{t+1}(x_{t+1}, d'|x_t, d_t)[  u_{t+1}(x_{t+1},d') +\psi_{d'}(\boldsymbol  p_{t+1}(x_{t+1}))] \nonumber\\
    &\quad + \beta \sum_{x_{t+2} \in \mathcal{X}}   V_{t+2}(x_{t+2}) \sum_{d' \in \mathcal{D}} \mathrm{w}_{t+1}(x_{t+1}, d'|x_t, d_t) f_{t+1}(x_{t+2}| x_{t+1},d').
\end{align}    
 
By substituting equation (\ref{eq:weighted})  into the right-hand side of equation (\ref{eq:value_function}), we obtain the following representation of choice-specific value functions:
\begin{align}
  v_t(x_t, d_t) =&   u_t(x_t, d_t)  + \beta \sum_{x_{t+1}\in\mathcal{X}} \left[ \bar u^{\mathbf{w}_{t+1} } _{t+1}(x_{t+1}) +\bar \psi^{\mathbf{w}_{t+1} }(\boldsymbol  p_{t+1}(x_{t+1}))\right]f_t(x_{t+1}|x_t, d_t)\nonumber \\
& + \beta^2 \sum_{ x_{t+2}\in\mathcal{X}}  V_{t+2}(x_{t+2})\sum_{ x_{t+1}\in\mathcal{X}} \bar f_{t+1}^{\mathbf{w}_{t+1} }(x_{t+2}|x_{t+1}) f_t(x_{t+1}|x_t, d_t),\label{eq:representation}
\end{align}
where 
\begin{align*}
& \bar  u^{\mathbf{w}_{t+1} } _{t+1}(x_{t+1}):= \sum_{d' \in \mathcal{D}}\mathrm{w}_{t+1}(x_{t+1}, d'|x_t, d_t)    u_{t+1}(x_{t+1},d'),\\
&\bar\psi^{\mathbf{w}_{t+1} }(\boldsymbol  p_{t+1}(x_{t+1})):= \sum_{d' \in \mathcal{D}} \mathrm{w}_{t+1}(x_{t+1}, d'|x_t, d_t) \psi_{d'}(\boldsymbol  p_{t+1}(x_{t+1})),\quad\text{and} \\
& \quad \bar f_{t+1}^{\mathbf{w}_{t+1}}(x_{t+2}|x_{t+1}):=\sum_{d' \in \mathcal{D}} \mathrm{w}_{t+1}(x_{t+1}, d'|x_t, d_t) f_{t+1}(x_{t+2}|x_{t+1},d').
\end{align*}
 
% \begin{theorem}\label{thm:representation}
% For each triple $(x_t,x_{t+1},d_t)\in \mathcal{X}^2\times\mathcal{D}$, let $$\mathbf{w}_{t+1}(x_t,x_{t+1},d_{t}):= ( w_{1,t+1}(x_t,x_{t+1},d_{t}),...,w_{D,t+1}(x_t,x_{t+1},d_{t}) )\t$$ be a vector of weights that satisfy $\sum_{d'\in \mathcal{D}}\mathrm{w}_{t+1}(x_{t+1}, d'|x_t, d_t)=1$ and $||\mathbf{w}_{t+1}(x_t,x_{t+1},d_{t})||<\infty$. Then,
% \begin{align}
%   v_t(x_t, d_t) =&   u_t(x_t, d_t)  + \beta \sum_{x_{t+1}\in\mathcal{X}} \left[ \bar u^{\mathbf{w}_{t+1} } _{t+1}(x_{t+1}) +\bar \psi^{\mathbf{w}_{t+1} }(\boldsymbol  p_{t+1}(x_{t+1}))\right]f_t(x_{t+1}|x_t, d_t)\nonumber \\
% & + \beta^2 \sum_{ x_{t+2}\in\mathcal{X}}  V_{t+2}(x_{t+2})\sum_{ x_{t+1}\in\mathcal{X}} \bar f_{t+1}^{\mathbf{w}_{t+1} }(x_{t+2}|x_{t+1}) f_t(x_{t+1}|x_t, d_t),\label{eq:representation}
% \end{align}
% where 
% \begin{align*}
% & \bar  u^{\mathbf{w}_{t+1} } _{t+1}(x_{t+1}):= \sum_{d' \in \mathcal{D}}\mathrm{w}_{t+1}(x_{t+1}, d'|x_t, d_t)    u_{t+1}(x_{t+1},d'),\\
% &\bar\psi^{\mathbf{w}_{t+1} }(\boldsymbol  p_{t+1}(x_{t+1})):= \sum_{d' \in \mathcal{D}} \mathrm{w}_{t+1}(x_{t+1}, d'|x_t, d_t) \psi_{d'}(\boldsymbol  p_{t+1}(x_{t+1})),\quad\text{and} \\
% &\text{and}\quad \bar f_{t+1}^{\mathbf{w}_{t+1}}(x_{t+2}|x_{t+1}):=\sum_{d' \in \mathcal{D}} \mathrm{w}_{t+1}(x_{t+1}, d'|x_t, d_t) f_{t+1}(x_{t+2}|x_{t+1},d').
% \end{align*}
% \end{theorem}
The formulation presented in (\ref{eq:representation}) is derived from Theorem 1, equation (2.6), within the work of \cite{arcidiacono_miller_2019nonstationary}. A critical distinction in our approach lies in the selection of decision weights, $\mathbf{w}_{t+1}$. Unlike the methodology proposed by Arcidiacono and Miller, who define weights solely based on the pair $(x_{t+1},d_t)$, we determine weights for each combination of $(x_t,x_{t+1},d_t)$. This extension modifies and broadens their original concept of 1-period dependence.

% ---------

% The expression for the difference in the value functions between two choices, denoted $\tilde{v}_t(x_t, d_t)$, is mistakenly defined as the difference between a value function and itself, resulting in a value of zero. To improve the coherence and correctness of the statement, let us introduce two distinct choices, say $d_t$ and $d'_t$, and redefine $\tilde{v}_t(x_t, d_t, d'_t)$ accordingly:

Taking a choice of $0$ as the baseline choice, let the difference in value functions between two distinct choices, $d_t$ and $0$, be represented by $\tilde{v}_t(x_t, d_t) \equiv v_t(x_t, d_t) - v_t(x_t, 0)$ for $d_t\in \mathcal{D}\backslash \{0\}$. From (\ref{eq:representation}), this difference can be expressed as follows: 
\begin{align} \label{eq:value_function}
\tilde{v}_t(x_t, d_t) = & \ \tilde{u}_t(x_t, d_t) + \beta \sum_{x_{t+1}\in\mathcal{X}} \left[ \bar{u}^{\mathbf{w}_{t+1}}_{t+1}(x_{t+1}) + \bar{\psi}^{\mathbf{w}_{t+1}}(\boldsymbol{p}_{t+1}(x_{t+1})) \right] \tilde{f}_t(x_{t+1}|x_t, d_t) \nonumber \\
& + \beta^2 \sum_{x_{t+2}\in\mathcal{X}} V_{t+2}(x_{t+2}) \left[ \sum_{x_{t+1}\in\mathcal{X}} \bar{f}_{t+1}^{\mathbf{w}_{t+1}}(x_{t+2}|x_{t+1}) \tilde{f}_t(x_{t+1}|x_t, d_t) \right], 
\end{align} 
where $\tilde{u}_t(x_t, d_t) = u_t(x_t, d_t) - u_t(x_t, 0)$ captures the immediate utility difference between choosing $d_t$ over the choice of $0$ while $\tilde{f}_t(x_{t+1}|x_t, d_t) = f_t(x_{t+1}|x_t, d_t) - f_t(x_{t+1}|x_t, 0)$ is the difference in transition probabilities between the choices. The term $\bar{u}^{\mathbf{w}_{t+1}}_{t+1}(x_{t+1}) + \bar{\psi}^{\mathbf{w}_{t+1}}(\boldsymbol{p}_{t+1}(x_{t+1}))$ represents the expected future utilities in period $t+1$ and $V_{t+2}(x_{t+2})$ represents the expected value function in time $t+2$.  

\begin{definition}[1-period dependence]
A model is said to exhibit 1-period dependence when for every sequence of states and decisions $(x_t, x_{t+2}, d_t) \in \mathcal{X}^2 \times \mathcal{D}$, there exists a corresponding set of decision weights $\{\mathbf{w}_{t+1}(x_{t+1} | x_t, d_t): (x_{t+1} | x_t, d_t) \in \mathcal{X}^2 \times \mathcal{D}\}$ that satisfy the following condition:

\begin{align}\label{eq:2-dependence}
\sum_{x_{t+1} \in \mathcal{X}} \bar f_{t+1}^{\mathbf{w}_{t+1}}(x_{t+2} | x_{t+1})
[f_t(x_{t+1} | x_t, d_t) - f_t(x_{t+1} | x_t, 0)] = 0.
\end{align}
\end{definition}

Our definition of 1-period dependence broadens the definition established by \cite{arcidiacono_miller_2019nonstationary} due to the inclusion of $x_t$ in the decision weights, allowing for a more expansive class of models to exhibit 1-period dependence.  Taking advantage of the flexibility arising from the inclusion of $x_t$, we develop a numerical method designed to precisely identify the decision weights. This method facilitates the construction of models with finite dependence, ensuring that our approach is practically applicable in a variety of models beyond a model with terminal states or renewal choices.

% As we illustrate later, choosing the  decision weights  for each triple  $(x_t,x_{t+1},d_t)$ has a computational advantage over choosing the weights for $(x_{t+1},d_t)$.

Under 1-period dependence, value differences in choice-specific value functions between choices do not depend on the integrated value function $V_{t+2}$ as
\begin{equation}\label{eq:value-differences}
\tilde{  v}_t(x_t, d_t) =   \tilde u_t(x_t, d_t)  + \beta \sum_{x_{t+1}\in\mathcal{X}} \left[ \bar u^{\mathbf{w}_{t+1}} _{t+1}(x_{t+1}) +\bar \psi^{\mathbf{w}_{t+1}}(\boldsymbol  p_{t+1}(x_{t+1}))\right]\tilde  f_t(x_{t+1}|x_t, d_t).
\end{equation}

This analysis can be extended to finite dependence over multiple periods.  
Given a triple $(x_{t+\tau} | x_t, d_t) \in \mathcal{X}^{2} \times \mathcal{D}$ for $\tau=1,\dots,\mathcal{T}$, define the vector of weights $\mathbf{w}_{t+\tau}(x_{t+\tau} | x_t, d_t)$ as
\begin{equation}
\mathbf{w}_{t+\tau}(x_{t+\tau} | x_t, d_t) := \left( \mathrm{w}_{{t+\tau}}(x_{t+\tau}, 0| x_t, d_t), \ldots, \mathrm{w}_{{t+\tau}}(x_{t+\tau}, D| x_t, d_t) \right)^\top,
\end{equation}
which satisfy the normalization condition \(\sum_{d' \in \mathcal{D}} w_{{t+\tau}}(x_{t+\tau},d' | x_t, d_t) = 1, \) and the boundedness condition  \( \left\|\mathbf{w}_{{t+\tau}}\right\| < \infty.\)

Let $\mathbf{W}_{t+\tau}(x_t, \ldots, x_{t+\tau}, d_t)$ be defined as
\begin{equation}\label{eq:W}
\mathbf{W}_{t+\tau}(x_t,\ldots,x_{t+\tau},d_t) := \left(\mathbf{w}_{t}(x_t | x_t, d_t), \ldots, \mathbf{w}_{t+\tau}(x_{t+\tau} | x_t, d_t)\right).
\end{equation}
Then, define $\kappa_{t+\tau}^{\mathbf{W}_{t+\tau}}(x_{t+\tau+1}|x_t, d_t)$ recursively as
\begin{equation}\label{eq:kappa_recursive}
\begin{aligned}
\kappa_{t+\tau}^{\mathbf{W}_{t+\tau}}(x_{t+\tau+1} | x_t, d_t) := \left\{
\begin{array}{ll}
f_t(x_{t+1} | x_t, d_t), & \text{for } \tau = 0, \\
\sum_{x_{t+\tau} \in \mathcal{X}} \bar{f}_{t+\tau}^{\mathbf{w}_{t+\tau}}(x_{t+\tau+1} | x_{t+\tau}, x_t, d_t ) \\
\quad \times \kappa_{t+\tau-1}^{\mathbf{W}_{t+\tau-1}}(x_{t+\tau} | x_t, d_t), & \text{for } \tau = 1, \ldots, \mathcal{T},
\end{array}
\right.
\end{aligned}
\end{equation}
where the function $\bar{f}_{t+\tau}^{\mathbf{w}_{t+\tau}}$ uses the weights $\mathbf{w}_{t+\tau}$ to adjust the transition probabilities appropriately: \( \bar{f}_{t+\tau}^{\mathbf{w}_{t+\tau}}(x_{t+\tau+1} | x_{t+\tau}, x_t, d_t ) = \sum_{d \in \mathcal{D}}f_{t+\tau}(x_{t+\tau+1} | x_{t+\tau}, d) \mathrm{w}_{t+\tau} (x_{t+\tau+1},d | x_{t}, d_t) \).

Then, we define the $\rho$-period finite dependence as follows.
\begin{definition}[$\rho$-period finite dependence]
Let $\rho \geq 0$ be an integer. A model exhibits $\rho$-period finite dependence if, for all $(x_t, x_{t+\rho+1}, d_t) \in \mathcal{X}^2 \times \mathcal{D}$, there exists a set of decision weights $\mathbf{W}_{t+\rho}(x_t, \ldots, x_{t+\rho}, d_t)$ as defined by equation (\ref{eq:W}), such that the following condition holds:
\begin{equation}\label{eq:rho-dependence}
\kappa_{t+\rho}^{\mathbf{W}_{t+\rho}}(x_{t+\rho+1} | x_t, d_t) - \kappa_{t+\rho}^{\mathbf{W}_{t+\rho}}(x_{t+\rho+1} | x_t, 0) = 0.
\end{equation}
\end{definition}

Applying induction to equation (\ref{eq:weighted}) together with equation (\ref{eq:value_function}), we obtain the following representation of value differences under $\rho$-period finite dependence: 
\begin{align} \label{eq:rho-dependence_value}
\tilde{  v}_t(x_t, d_t) =   \tilde u_t(x_t, d_t)  + \sum_{\tau=1}^{\rho}\beta^{\tau}  \sum_{x_{t+\tau}\in\mathcal{X}} \left[ \bar u^{\mathbf{w}_{t+\tau}} _{t+\tau}(x_{t+\tau}) +\bar \psi^{\mathbf{w}_{t+\tau}}(\boldsymbol  p_{t+\tau}(x_{t+\tau}))\right]\tilde\kappa_{t+\tau-1}^{\mathbf{W}_{t+\tau-1}}(x_{t+\tau}|x_t, d_t),
\end{align} 
where
$\tilde\kappa_{t+\tau-1}^{\mathbf{W}_{t+\tau-1}}(x_{t+\tau}|x_t, d_t):=
\kappa_{t+\tau-1}^{\mathbf{W}_{t+\tau-1}}(x_{t+\tau}|x_t, d_t)-
    \kappa_{t+\tau-1}^{\mathbf{W}_{t+\tau-1}}(x_{t+\tau}|x_t,0)
$. By substituting the derived representation of value differences in Equation (\ref{eq:rho-dependence_value}) into the right hand side of the OCP mapping in Equation (\ref{eq:ocp_mapping}),  
we can derive the likelihood function for estimating the model parameters, as discussed in Section \ref{sec:4_estimator}.

An important special case is the 2-period finite dependence. A model exhibits $2$-period finite dependence if there exists a set of decision weights  $\{\mathbf{w}_{t+1}(x_{t+1} | x_t, d_t): (x_{t+1},x_t,d_t)\in \mathcal{X}^2\in \mathcal{D}\}$ and $\{\mathbf{w}_{t+2}(x_{t+2} | x_t, d_t)(x_{t+1},x_t,d_t)\in \mathcal{X}^2\in \mathcal{D}\}$ that satisfy the following condition:
$$ 
\sum_{x_{t+2} \in \mathcal{X}} \bar f_{t+2}^{\mathbf{w}_{t+2}}(x_{t+3} | x_{t+2})\sum_{x_{t+1} \in \mathcal{X}}\bar f_{t+1}^{\mathbf{w}_{t+1}}(x_{t+2} | x_{t+1})
[f_t(x_{t+2} | x_{t+1}, d_t) - f_t(x_{t+2} | x_{t+1}, 0)] = 0
$$
for all $(x_t,x_{t+3},d_{t})\in\mathcal{X}^2\times\mathcal{D}$. We introduce an algorithm that identifies such weights and establish conditions under which a model exhibits 2-period finite dependence.

\section{Solving the decision weights}\label{sec:3_weight}

The predominant approach employed to determine the existence of finite dependence in the current body of research is the ``guess and verify" method. As a consequence, nearly all empirical applications that invoke the concept of finite dependence have concentrated on two specific cases of one-period dependence: models that incorporate a terminal choice or a renewal choice \citep{arcidiacono_miller_2019nonstationary}.

To broaden the scope of the application, we propose a computational method for determining decision weights that satisfy the conditions for the finite dependence property, as defined by equations (\ref{eq:2-dependence}) and (\ref{eq:rho-dependence}). Utilizing these weights, we develop an estimator for conditional choice probabilities predicated on the concept of near-finite dependence within nonstationary models.

To illustrate the idea, we restrict our analysis to a model with binary choices, denoted by $\mathcal{D} = \{0,1\}$, in this section. Given the dichotomous nature of decision-making, we adopt the streamlined notation for the weight function given by $\mathrm{w}_{t+\tau} (x_{t+\tau} | x_t) := \mathrm{w}_{t+\tau} (x_{t+\tau}, 1 | x_t, 1)$ so that $\mathrm{w}_{t+\tau} (x_{t+\tau}, 0 | x_t, 1)$ is represented by $1-\mathrm{w}_{t+\tau} (x_{t+\tau} | x_t)$. Our approach is straightforwardly applicable to a model with three or more choices. %This paper adopts the more streamlined notation for the ensuing discussion, in recognition of its adequacy in encapsulating the model's binary decision framework.

% \textbf{[The following needs to be discussed below. Also, need to explicitly define $\mathbf{F}_0$ and $\mathbf{F}_1$]
% In this context, $\mathbf{F}_0$ represents the Markovian transition matrix under the assumption that the player opts for action $0$ in all states, while $\mathbf{F}_1$ corresponds to the matrix assuming action $1$ is selected. Consequently, $\mathbf{\tilde{F}}$ is defined as the difference between these two state matrices, i.e., $\mathbf{\tilde{F}} = \mathbf{F}_1 - \mathbf{F}_0$.
% Note that $\mathbf{\tilde{F}}$ has rank deficiency, and $rank(\mathbf{\tilde{F}}) < |\mathcal{X}|$.}
% {\color{blue}

Let $\mathbf{F}_{0,\tau}$ and $\mathbf{F}_{1,\tau}$ be Markovian transition matrices representing the probabilities of transitioning between states when action $0$ and action $1$ are chosen at time $\tau$, respectively. The matrices are defined by the transition probabilities $f_t(x_{\tau+1}^{(i)} | x_{\tau}^{(j)}, a)$ for all states $i, j$ and action $a$ as:

\[
\mathbf{F}_{0,\tau} := \left[\begin{array}{ccc}
    f_t(x_{\tau+1}^{(1)} | x_{\tau}^{(1)}, 0) & \cdots & f_t(x_{\tau+1}^{(X)} | x_{\tau}^{(1)}, 0) \\
    \vdots & \ddots & \vdots \\
    f_t(x_{\tau+1}^{(1)} | x_{\tau}^{(X)}, 0) & \cdots & f_t(x_{\tau+1}^{(X)} | x_{\tau}^{(X)}, 0)
\end{array}\right]\quad\text{and}
\]

\[
\mathbf{F}_{1,\tau} := \left[\begin{array}{ccc}
    f_t(x_{\tau+1}^{(1)} | x_{\tau}^{(1)}, 1) & \cdots & f_t(x_{\tau+1}^{(X)} | x_{\tau}^{(1)}, 1) \\
    \vdots & \ddots & \vdots \\
    f_t(x_{\tau+1}^{(1)} | x_{\tau}^{(X)}, 1) & \cdots & f_t(x_{\tau+1}^{(X)} | x_{\tau}^{(X)}, 1)
\end{array}\right].
\]
Further, we define the difference of transition density as  \[ \tilde {\mathbf{F}}_{\tau} := \mathbf{F}_{1,\tau} -  \mathbf{F}_{0,\tau} =\left[\begin{array}{ccc}
    \tilde{f}_t(x_{\tau+1}^{(1)} | x_{\tau}^{(1)}, 1) & \cdots & \tilde{f}_t(x_{\tau+1}^{(X)} | x_{\tau}^{(1)}, 1) \\
    \vdots & \ddots & \vdots \\
    \tilde{f}_t(x_{\tau+1}^{(1)} | x_{\tau}^{(X)}, 1) & \cdots & \tilde{f}_t(x_{\tau+1}^{(X)} | x_{\tau}^{(X)}, 1)
\end{array}\right]. \]

\subsection{1-period dependence}
 
For 1-period finite dependence, we propose a method of computing a weight vector $\mathbf{w}_{t+1}:=\{\mathrm{w}_{t+1}(x_{t+1}|x_{t}): (x_t,x_{t+1})\in\mathcal{X}^2\}$ such that the difference between $v(x_t, 1)$ and $v(x_t, 0)$ does not reflect the influence of payoffs beyond time $t+1$. The weight vector for 1-period finite dependence must satisfy the following condition:
\begin{equation} \label{eq:2-dependence-2}
\sum_{x_{t+1}} \tilde {f}_{t+1}(x_{t+2} | x_{t+1},1 ) \mathrm{w}_{t+1} (x_{t+1} | x_{t}) \tilde{f}_t(x_{t+1} | x_t,1) +\sum_{x_{t+1}} \tilde{f}_t(x_{t+1} | x_t,1) {f}_{t+1}(x_{t+2} | x_{t+1},0 )  = 0,  
\end{equation}  
for all $(x_t,x_{t+2})\in\mathcal{X}^2$.
%with \( \bar{f}_{t+1}^{\mathbf{w}_{t+1}}(x_{t+2} | x_{t+1}, x_t) = \tilde {f}_{t+1}(x_{t+2} | x_{t+1},1 ) \mathrm{w}_{t+1} (x_{t+1} | x_{t}) + {f}_{t+1}(x_{t+2} | x_{t+1},0 ) \) .

Let $\mathrm{\check{w}}_{t+1}(x_{t+1}|x_t) := \tilde{f}(x_{t+1} | x_t, 1) \mathrm{w}_{t+1}(x_{t+1}|x_t)$. Then,
rewriting (\ref{eq:2-dependence-2}) for all values of $(x_t,x_{t+2})\in\mathcal{X}^2$ in matrices, we may express (\ref{eq:2-dependence-2}) as $$ \mathbf{\check{W}}_{t+1} \mathbf{\tilde{F}}_{t+1} + \mathbf{\tilde{F}}_{t} \mathbf{F}_{0, t+1}=0,$$ where
%{  \color{blue} \textbf{[Define these matrices explicitly.] \\
$$\mathbf{ \check{W}}_{t+1} = \begin{bmatrix}
     \tilde{f}(x_{t+1}^{(1)} | x_t^{(1)}, 1) \mathrm{w}_{t+1}(x_{t+1}^{(1)} | x_t^{(1)}) & \cdots &  \tilde{f}(x_{t+1}^{(X)} | x_t^{(1)}, 1) \mathrm{w}_{t+1}(x_{t+1}^{(X)} | x_t^{(1)}) \\
     \vdots & \ddots & \vdots \\
     \tilde{f}(x_{t+1}^{(1)} | x_t^{(X)}, 1) \mathrm{w}_{t+1}(x_{t+1}^{(1)} | x_t^{(X)}) & \cdots &  \tilde{f}(x_{t+1}^{(X)} | x_t^{(X)}, 1) \mathrm{w}_{t+1}(x_{t+1}^{(X)}|x_t^{(X)})
\end{bmatrix}.$$ 

%Because $\mathbf{\tilde{F}}$ is not full rank, we may . }
%}

% Next, we show that fulfilling this criterion ensures that the probability $p(d,x_t)$ is described exclusively in terms of $p(d,x_{t+1})$ for one peirod ahead, thereby confirming the presence of finite dependence in our model. 

% We can reorganize the expression to satisfy the condition for all state transitions $(x_t, x_{t+2})$. This involves summing over all intermediate states $x_{t+1} \in \mathcal{X}$ and all decisions $d_{t+1} \in \mathcal{D} \setminus \{0\}$, yielding the following relationship:
% \[
% % \[
% \sum_{x_{t+1}, d_{t+1}} \tilde{f}(x_{t+1} | x_t, d_t) \left[ \mathrm{w}_{t+1}(x_{t+1}|x_t) \tilde{f}(x_{t+2} | x_{t+1}, d_{t+1}) \right] + \sum_{x_{t+1}} \tilde{f}(x_{t+1} | x_t, d_t) f(x_{t+2} | x_{t+1}, 0) = 0.
% \]
% \]

% The equation (\ref{eq:2-dependence-2}) provides $|X|$ restrictions for the determination of the decision weights. Given that there are $|X| \times |X|$ potential choice variables through the selection of $\mathbf{w}$, even in the most challenging scenario, the system maintains finite dependence when actions are binary.

We define a new weight function $\mathrm{\check{w}}_{t+1}(x_{t+1}|x_t) = \tilde{f}(x_{t+1} | x_t) \mathrm{w}_{t+1}(x_{t+1}|x_t)$. This function $\mathbf{\check{w}}$ solves the linear system as described by the equation:
\begin{equation}\label{eq:weight_minimization}
\mathbf{\check{w}}_{t+1} = -\mathbf{\tilde{F}} \mathbf{F}_0 (\mathbf{\tilde{F}})^+ , \quad \mathbf{\tilde{F}} \mathbf{F}(\mathbf{w}_{t+1}) = \mathbf{\check{w}} \mathbf{\tilde{F}} + \mathbf{\tilde{F}} \mathbf{F}_0.
\end{equation}
where $(\mathbf{\tilde{F}})^+$ denotes the Moore–Penrose inverse of the transposed matrix $\mathbf{\tilde{F}}$.

\subsection{$\rho$-period finite dependence}

Consider the scenario where our goal is to minimize the weights over two distinct periods, denoted by the weight vectors $\mathbf{w}_1$ and $\mathbf{w}_2$. Given the complexity inherent in estimating $|X|^3$ possible trajectories, each dependent on a unique combination of these weights, the minimization procedure is structured to proceed sequentially. We assume that the $(i,j)$-th element of $\mathbf{w}_{t+1}$ is $\mathbf{w}_{t+1}(x_{it+1}^{(j)} | x_{it}^{(i)})$ and that of $\mathbf{w}_{t+2}$ is $\mathbf{w}_2(x_{it+2}^{(j)} | x_{it}^{(i)})$.

We initiate the optimization procedure with a given arbitrary weight vector $\mathbf{w}_{t+1}$. Our objective is to mitigate the influence of future expected values on the likelihood of the current decision. To achieve this, we select a second-period weight vector $\mathbf{w}_{t+2}$ by solving the following optimization problem:

\[
\mathbf{w}_{t+2} = \arg \min_{\mathbf{w}} \mathbf{\tilde{F}} \mathbf{F}(\mathbf{w}_{t+1}) \mathbf{F}(\mathbf{w}),
\]
where $\mathbf{\tilde{F}}$ symbolizes the initial configuration of the system, and $\mathbf{F}(\mathbf{w})$ denotes the transformation matrix parameterized by weight vector $\mathbf{w}$.

Analogous to the methodology applied for deducing $\mathbf{w}_{t+1}$, we determine $\mathbf{w}_{t+2}$ by employing a similar computational strategy, which facilitates the derivation of an explicit expression for $\mathbf{\check{w}}_{t+2}$. The relationship between $\mathbf{\check{w}}_{t+2}$ and $\mathbf{w}_{t+2}$ is established as follows:

\[
\mathbf{\check{w}}_{t+2}(x_{t+2} | x_{t}) = \tilde{\kappa}_{t+1}^{\mathbf{W}_{t+1}}(x_{t+2}|x_{t},d_t) \mathbf{w}_{t+2}(x_{t+2} | x_{t}),
\]
where the definition of $\tilde{\kappa}_{t+1}^{\mathbf{W}_{t+1}}(x_{t+2}|x_{t},d_t)$  follows \eqref{eq:rho-dependence} and \eqref{eq:rho-dependence_value}. 

The definition results in the following equation:

\[
\begin{split}
\mathbf{\check{w}}_{t+2} &= - \left( \mathbf{\tilde{F}} \mathbf{F}(\mathbf{w}_{t+1}) \right) \left( \mathbf{F}_0(\mathbf{\tilde{F}}) \right)^+, \\
\mathbf{\tilde{F}} \mathbf{F}(\mathbf{w}_{t+1}) \mathbf{F}(\mathbf{w}_{t+2}) &= \mathbf{\check{w}}_{t+2} \mathbf{\tilde{F}} + \mathbf{\tilde{F}} \mathbf{F}(\mathbf{w}_{t+1}) \mathbf{F}_0 \\
&= \mathbf{\check{w}}_{t+2} \mathbf{\tilde{F}} + \mathbf{\check{w}}^{(1)} \mathbf{\tilde{F}}  \mathbf{F}_0 + \mathbf{\tilde{F}} (\mathbf{F}_0)^2.
\end{split}
\]

This formulation encapsulates the sequential optimization of weights, accounting for the intertemporal dependencies inherent in the decision-making process.

We can further extend the results to $\rho$ period, where 
\[\begin{split}
& \mathbf{\check w}_{t+\rho} = - \big( \mathbf{\tilde F} \underbrace{ \mathbf{F}(\mathbf{w}_{t+1}), \ldots \mathbf{F}(\mathbf{w}_{t+\rho-1}) }_{\rho - 1 \textbf{ periods}} \big)\mathbf{F}_0  (\mathbf{\tilde F})^+, 
\\ & 
\big( \mathbf{\tilde F} \underbrace{ \mathbf{F}(\mathbf{w}_{t+1}), \ldots \mathbf{F}(\mathbf{w}_{t+\rho-1}) }_{\rho - 1 \textbf{ periods}} \big) \mathbf{F}(\mathbf{w}_{t+\rho}) = \sum_{s=1}^{\rho} \mathbf{\check w}_{t+s} \mathbf{\tilde F} (\mathbf{F}_0)^{\rho-s} + \mathbf{\tilde F} (\mathbf{F}_0)^{\rho}).
\end{split}\]

Let $\mathbf{F}: \mathcal{W} \to \mathcal{F}$ be a function mapping from some domain $\mathcal{W}$ to a codomain $\mathcal{F}$. Let $\mathbf{\tilde F}$ be a fixed element in $\mathcal{F}$, and let $\mathbf{\tilde F}^+$ denote the Moore-Penrose pseudoinverse of $\mathbf{\tilde F}$. For a positive integer $\rho$, define the sequence $({\tilde{\mathbf F}}^{(k)})_{k=0}^{\rho}$ in $\mathcal{F}$ recursively by
\begin{equation}
\tilde{\mathbf F}^{(k)} \equiv
\begin{cases}
\mathbf{\tilde F}, & \text{if } k=0, \\
\tilde{\mathbf F}^{(k-1)} \mathbf{F}(\mathbf{w}_{t+k}), & \text{for } k=1,2,\ldots,\rho,
\end{cases}
\end{equation}
where each $\mathbf{w}_{t+k}$ is an element of $\mathcal{W}$.

Then, the transformation that constitutes the one period ahead bias correction mapping from $\tilde{\mathbf F}^{(\rho-1)}$ to $\tilde{\mathbf F}^{(\rho)}$ can be expressed as
\begin{equation}
\tilde{\mathbf F}^{(\rho)} = \tilde{\mathbf F}^{(\rho-1)} \mathbf{ F}_0 (I - \mathbf{\tilde F}^+ \mathbf{\tilde F} ),
\end{equation}
where $\mathbf{\tilde F}_0$ is a fixed element in $\mathcal{F}$ associated with the bias correction, and $I$ is the identity operator on $\mathcal{F}$.

We have, for $\rho$ period
\[\begin{split}
    \underbrace{\mathbf{\tilde F} \mathbf{F}(\mathbf{w}_{t+1}), \ldots,  \mathbf{F}(\mathbf{w}_{t+\rho})}_{\tilde{\mathbf F}^{(\rho)}} &  = \mathbf{\check w}_{t+\rho}\mathbf{\tilde F} + \tilde{\mathbf F}^{(\rho-1)} \mathbf{F}_0,  = \sum_{s=1}^{\rho} \mathbf{\check w}_{t+s}\mathbf{\tilde F} \mathbf{F}_0^{\rho - s} + \mathbf{\tilde F} (\mathbf{F}_0 )^{\rho}.
\end{split} \]

This formulation assumes that the operations involved are well-defined, such as the product of elements in $\mathcal{F}$ and the existence of the Moore-Penrose pseudoinverse. It also assumes that the sequence of elements $\mathbf{w}_{t+k}$ is given and that the recursive definition of $\tilde{\mathbf F}^{(k)}$ is well-founded. The proposition is now more rigorous as it clearly defines the context, the recursive relationship, and the transformation in question.

In order to ascertain the rate of contraction within the mapping, it is imperative to conduct an analysis of the singular values associated with the operator $\mathbf{ F}_0 (I - \mathbf{\tilde F}^+ \mathbf{\tilde F} )$. It is noteworthy to mention that the term $(I - \mathbf{\tilde F}^+ \mathbf{\tilde F} )$ represents the orthogonal projector onto the null space of the matrix $\mathbf{\tilde F}$. This orthogonal projector is instrumental in determining the behavior of the mapping in the context of convergence and stability analysis.

Let us define the projection matrix onto the null space of $\mathbf{\tilde F}$ as
\[
\mathcal{P}_{\mathbf{\tilde F}} = \left( \mathbf{I} -  \mathbf{\tilde F}^+ \mathbf{\tilde F} \right).
\]
Given the relationship \( \mathbf{\check w}_{t+\rho} = - \tilde{\mathbf F}^{(\rho-1)} \mathbf{F}_0 \mathbf{\tilde F}^+ \), we can plug in to obtain:

\[
\begin{split}
    \tilde{\mathbf F}^{(\rho)} &=  \tilde{\mathbf F}^{(\rho-1)} \mathbf{F}_0  \Big( \mathbf{I} -  \mathbf{\tilde F}^+ \mathbf{\tilde F} \Big) =  \tilde{\mathbf F}^{(1)} \Big( \mathbf{F}_0  \big( \mathbf{I} -  \mathbf{\tilde F}^+ \mathbf{\tilde F} \big)\Big)^{\rho - 1} \\
    &= \Big(\mathbf{\check w}_{t+1} \mathbf{\tilde F} + \mathbf{\tilde F} \mathbf{F}_0 \Big) \Big(\mathbf{F}_0 \big( \mathbf{I} -  \mathbf{\tilde F}^+ \mathbf{\tilde F} \big)\Big)^{\rho - 1}.
\end{split}
\]

When weights are chosen optimally, we have:

\[
\begin{split}
    \tilde{\mathbf F}^{(\rho)} & = - \big(\mathbf{\tilde F} \mathbf{F}_0 \big) \Big( \mathbf{F}_0 \mathcal{P}_{\mathbf{\tilde F}} \Big)^{\rho - 1} \Big( \mathbf{I} -  \big(\mathbf{\tilde F}\Big( \mathbf{F}_0 \mathcal{P}_{\mathbf{\tilde F}} \Big)^{\rho - 1} \big)^+ \big(\mathbf{\tilde F}\Big( \mathbf{F}_0 \mathcal{P}_{\mathbf{\tilde F}} \Big)^{\rho - 1} \big) \Big),
\end{split}
\]

and also:

\[
\mathbf{\check w}^{(\rho)*}_{t+1} = - \Big(\mathbf{\tilde F} \mathbf{F}_0 \Big(\mathbf{F}_0 \mathcal{P}_{\mathbf{\tilde F}}\Big)^{\rho - 1}\Big)  \Big(\mathbf{\tilde F} \Big(\mathbf{F}_0 \mathcal{P}_{\mathbf{\tilde F}}\Big)^{\rho - 1} \Big)^+ .
\]

\subsection{2-period finite dependence}

We now consider a model with 2-period finite dependence. To express the weight vector $\mathbf{w}^{(3)}$, we first relate it to $\mathbf{w}_{t+2}$, which is in turn expressed as a function of $\mathbf{w}_{t+1}$. If we re-optimize $\mathbf{w}_{t+1}$ given $\mathbf{w}_{t+2}$ and apply the previously derived expression
\[
\mathbf{w}_{t+2}(\mathbf{w}_{t+1}) = - \left(\mathbf{\check w}_{t+1} \mathbf{\tilde F} + \mathbf{\tilde F} \mathbf{F}_0 \right) \mathbf{F}_0  \mathbf{\tilde F}^+ ,
\]
we obtain the following:
\[
\begin{split}
    \mathbf{\tilde F} \mathbf{F}(\mathbf{w}_{t+1}) \mathbf{F}(\mathbf{w}_{t+2}(\mathbf{w}_{t+1})) & =  \left( - \left(\mathbf{\check w}_{t+1} \mathbf{\tilde F} + \mathbf{\tilde F} \mathbf{F}_0 \right) \mathbf{F}_0  \mathbf{\tilde F}^+  \mathbf{\tilde F} + \mathbf{\check w}_{t+1} \mathbf{\tilde F}  \mathbf{F}_0 + \mathbf{\tilde F} \mathbf{F}_0^2 \right)  \\
    & = \mathbf{w}_{t+1} \mathbf{\tilde F} \mathbf{F}_0  \left( \mathbf{I} -  \mathbf{\tilde F}^+  \mathbf{\tilde F} \right) + \mathbf{\tilde F} \mathbf{F}_0^2 \left( \mathbf{I} -  \mathbf{\tilde F}^+  \mathbf{\tilde F} \right).
\end{split}
\]

In terms of optimality, when determining the 1-period weight matrix, the first period weight matrix should be selected optimally as
\[
\mathbf{\check w}^{(2)*}_{t+1} = -\mathbf{\tilde F} \mathbf{F}_0^2 \mathcal{P}_{\mathbf{\tilde F}} \left(\mathbf{\tilde F} \mathbf{F}_0 \mathcal{P}_{\mathbf{\tilde F}} \right)^+ .
\]
And
\[
\tilde{\mathbf F}^{(2)} = \mathbf{\tilde F} \mathbf{F}_0 \mathbf{F}_0 \mathcal{P}_{\mathbf{\tilde F}} \left( \mathbf{I} -  (\mathbf{\tilde F} \mathbf{F}_0 \mathcal{P}_{\mathbf{\tilde F}} )^+ (\mathbf{\tilde F} \mathbf{F}_0 \mathcal{P}_{\mathbf{\tilde F}}) \right).
\]

Let $\mathbf{\tilde F} \in \mathbb{R}^{(D)X \times X}$, and consider its Singular Value Decomposition (SVD):
\[
\mathbf{\tilde F} = \mathbf{U}_{\tilde F} \mathbf{S}_{\tilde F} \mathbf{V}_{\mathbf{\tilde F}}^\top,
\]
where $\mathbf{U}_{\tilde F} \in \mathbb{R}^{(D)X \times (D)X}$ contains the left singular vectors, $\mathbf{S}_{\tilde F} \in \mathbb{R}^{(D)X \times X}$ is a diagonal matrix with the singular values, and $\mathbf{V}_{\mathbf{\tilde F}} \in \mathbb{R}^{X \times X}$ contains the right singular vectors. Since $\mathbf{\tilde F}$ does not have full rank, we can partition $\mathbf{S}_{\tilde F}$ as follows:
\[ 
\mathbf{S}_{\tilde F} = \begin{bmatrix}
        \mathbf{S}_{\tilde F,00} & \mathbf{0} \\
        \mathbf{0} & \mathbf{0} 
\end{bmatrix},
\]
where $\mathbf{S}_{\tilde F,00}$ is a $\text{rank}(\mathbf{\tilde F}) \times \text{rank}(\mathbf{\tilde F})$ matrix.

Given that $\mathbf{\tilde F}$ is the difference between two Markov transition matrices, and using the properties inherent to Markov matrices, we can infer that the rank of $\mathbf{\tilde F}$ is less than $X$. Thus, we can express the projection matrix $\mathcal{P}_{\mathbf{\tilde F}}$ onto the null space of $\mathbf{\tilde F}$ using the right singular vectors corresponding to the zero singular values as follows:
\[
\mathcal{P}_{\mathbf{\tilde F}} = \mathbf{V}_{\mathbf{\tilde F}} \Sigma_{\text{Null}(\tilde F)} \mathbf{V}_{\mathbf{\tilde F}}^\top,
\]
where $\Sigma_{\text{Null}(\tilde F)}$ is a diagonal matrix with entries
\[
\Sigma_{\text{Null}(\tilde F)} = \text{diag}\left( \underbrace{0, \ldots, 0}_{\text{rank}(\mathbf{\tilde F})\text{ zeros}}, 1, \ldots, 1 \right).
\]

Now, let $\mathbf{S}_0 = \mathbf{V}_{\mathbf{\tilde F}}^\top \mathbf{F}_0 \mathbf{V}_{\mathbf{\tilde F}}$. Since $\mathbf{V}_{\mathbf{\tilde F}}^\top$ is unitary, we have $\mathbf{F}_0 = \mathbf{V}_{\mathbf{\tilde F}} \mathbf{S}_0 \mathbf{V}_{\mathbf{\tilde F}}^\top$. Partitioning the matrix of $\mathbf{S}_0$ yields: 
\[ \mathbf{S}_0 = \begin{bmatrix}
        \underbrace{\mathbf{S}_{0,00} }_{ \text{rank}(\mathbf{\tilde F}) \times \text{rank}(\mathbf{\tilde F})} & \underbrace{\mathbf{S}_{0,01} }_{ \text{rank}(\mathbf{\tilde F}) \times \text{nullity}(\mathbf{\tilde F})} \\
        \underbrace{\mathbf{S}_{0,10} }_{ \text{nullity}(\mathbf{\tilde F}) \times \text{rank}(\mathbf{\tilde F})} & \underbrace{\mathbf{S}_{0,11} }_{ \text{nullity}(\mathbf{\tilde F}) \times \text{nullity}(\mathbf{\tilde F})}
    \end{bmatrix}.\]
    
\begin{proposition}\label{prop:finite_dependence_one_period}
    Suppose that $\mathbf{\tilde F}$ is the difference between two Markov transition matrices and $\mathbf{F}_0$ is a Markov transition matrix. If $\mathbf{S}_{0,01}  = 0$, then the model exhibit 1-period finite dependence, i.e., $\tilde{\mathbf F}^{(1)} = 0.$
\end{proposition}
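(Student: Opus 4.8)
The plan is to reduce the claim to a short block computation in the singular-value coordinates already introduced above. Recall from the bias-correction recursion established earlier that, taking $\rho=1$ with $\Delta\mathbf{F}^{(0)}=\mathbf{\tilde F}$, one has $\Delta\mathbf{F}^{(1)} = \mathbf{\tilde F}\,\mathbf{F}_0\,(\mathbf{I}-\mathbf{\tilde F}^{+}\mathbf{\tilde F}) = \mathbf{\tilde F}\,\mathbf{F}_0\,\mathcal{P}_{\mathbf{\tilde F}}$. The strategy is to substitute the three factorizations $\mathbf{\tilde F} = \mathbf{U}_{\tilde F}\mathbf{S}_{\tilde F}\mathbf{V}_{\mathbf{\tilde F}}^{\top}$, $\mathbf{F}_0 = \mathbf{V}_{\mathbf{\tilde F}}\mathbf{S}_0\mathbf{V}_{\mathbf{\tilde F}}^{\top}$, and $\mathcal{P}_{\mathbf{\tilde F}} = \mathbf{V}_{\mathbf{\tilde F}}\Sigma_{\mathrm{Null}(\tilde F)}\mathbf{V}_{\mathbf{\tilde F}}^{\top}$ into this expression, and then use the orthogonality relation $\mathbf{V}_{\mathbf{\tilde F}}^{\top}\mathbf{V}_{\mathbf{\tilde F}} = \mathbf{I}$ to cancel the two interior $\mathbf{V}$ factors, arriving at
\[
\Delta\mathbf{F}^{(1)} = \mathbf{U}_{\tilde F}\,\big(\mathbf{S}_{\tilde F}\,\mathbf{S}_0\,\Sigma_{\mathrm{Null}(\tilde F)}\big)\,\mathbf{V}_{\mathbf{\tilde F}}^{\top}.
\]

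Next I would evaluate the central product $\mathbf{S}_{\tilde F}\,\mathbf{S}_0\,\Sigma_{\mathrm{Null}(\tilde F)}$ using the block partitions fixed above. Right-multiplication by $\Sigma_{\mathrm{Null}(\tilde F)} = \mathrm{diag}(0,\ldots,0,1,\ldots,1)$, with $\mathrm{rank}(\mathbf{\tilde F})$ leading zeros, annihilates the first $\mathrm{rank}(\mathbf{\tilde F})$ columns of $\mathbf{S}_0$, so only the column blocks $\mathbf{S}_{0,01}$ and $\mathbf{S}_{0,11}$ survive. Left-multiplication by $\mathbf{S}_{\tilde F}$, whose only nonzero block is the leading $\mathbf{S}_{\tilde F,00}$, then retains only the top $\mathrm{rank}(\mathbf{\tilde F})$ rows (rescaled by $\mathbf{S}_{\tilde F,00}$), so the entire product collapses to the single block $\mathbf{S}_{\tilde F,00}\mathbf{S}_{0,01}$ sitting in the top-right corner, with every other block zero. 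Consequently $\Delta\mathbf{F}^{(1)}$ equals $\mathbf{U}_{\tilde F}$ times this one-block matrix times $\mathbf{V}_{\mathbf{\tilde F}}^{\top}$.

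Finally, the hypothesis $\mathbf{S}_{0,01} = \mathbf{0}$ makes the sole surviving block vanish, hence the bracketed matrix is the zero matrix and $\Delta\mathbf{F}^{(1)} = \mathbf{0}$, which is exactly one-period finite dependence. I do not anticipate a genuine obstacle: once the SVD substitutions are in place, everything is conformable block multiplication. The only items needing care are the dimensional bookkeeping — checking that the $(D-1)X\times X$ shape of $\mathbf{S}_{\tilde F}$ together with the $X\times X$ shapes of $\mathbf{S}_0$ and $\Sigma_{\mathrm{Null}(\tilde F)}$ make the indicated products well defined and the partitions compatible — and recording why the hypothesis is not vacuous, namely that $\mathbf{\tilde F}$ being a difference of Markov transition matrices gives $\mathbf{\tilde F}\mathbf{1}=\mathbf{0}$, so $\mathrm{rank}(\mathbf{\tilde F})<X$ and the null-space block of $\Sigma_{\mathrm{Null}(\tilde F)}$, and hence the block $\mathbf{S}_{0,01}$, is actually present.
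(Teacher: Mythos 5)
Your computation is correct and is essentially the argument the paper itself uses: the paper's proof of Proposition \ref{prop:finite_dependence_two_period} already evaluates $\mathbf{E}=\mathbf{\tilde F}\,\mathbf{F}_0\,\mathcal{P}_{\mathbf{\tilde F}}=\Delta\mathbf{F}^{(1)}$ in the same SVD coordinates and finds $\mathbf{S}_{\mathbf{E}}$ with sole nonzero block $\mathbf{S}_{\tilde F,00}\mathbf{S}_{0,01}$, so $\mathbf{S}_{0,01}=0$ forces $\Delta\mathbf{F}^{(1)}=0$ exactly as you conclude. Your added remarks on conformability and on $\mathbf{\tilde F}\mathbf{1}=\mathbf{0}$ guaranteeing a nontrivial null space are consistent with the paper and do not change the route.
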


Proposition \ref{prop:finite_dependence_one_period} elucidates the characteristics of Model 2 with respect to its 1-period finite dependence. This proposition delineates a sufficient condition for the establishment of 1-period finite dependence within the model framework. It posits that the null space of $\tilde{\mathbf{F}}$ remains invariant under the transformation induced by $\mathbf{F}_0$. Within this framework, a myriad of specific instances are encapsulated, including but not limited to, the phenomena of absorbing states. Additionally, the framework facilitates the identification of a pertinent weight vector, an assertion corroborated by the findings of Arcidiacono and Miller \cite{arcidiacono_miller_2019nonstationary}, who have contributed significantly to the discourse on nonstationary dynamic models.

\begin{proposition}\label{prop:finite_dependence_two_period}
    Suppose $\mathbf{\tilde F}$ is the difference between two Markov transition matrices and $\mathbf{F}_0$ is a Markov transition matrix. If $\mathbf{S}_{0,01} \mathbf{S}_{0,11} (\mathbf{I} - \mathbf{S}_{0,01}^+ \mathbf{S}_{0,01}) = 0$, then the model exhibit 1-period finite dependence, i.e., $\tilde{\mathbf F}^{(2)} = 0.$
\end{proposition}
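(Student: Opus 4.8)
The plan is to evaluate, in the orthonormal coordinate system spanned by the columns of $\mathbf{V}_{\mathbf{\tilde F}}$, the re-optimized two-period residual obtained just above the statement,
\[
\Delta \mathbf{F}^{(2)} = \mathbf{\tilde F}\,\mathbf{F}_0^{2}\,\mathcal{P}_{\mathbf{\tilde F}}\bigl(\mathbf{I}-A^{+}A\bigr),\qquad A := \mathbf{\tilde F}\,\mathbf{F}_0\,\mathcal{P}_{\mathbf{\tilde F}},
\]
exactly along the lines of the proof of Proposition~\ref{prop:finite_dependence_one_period}. First I would substitute $\mathbf{\tilde F}=\mathbf{U}_{\mathbf{\tilde F}}\mathbf{S}_{\mathbf{\tilde F}}\mathbf{V}_{\mathbf{\tilde F}}^{\top}$, $\mathbf{F}_0=\mathbf{V}_{\mathbf{\tilde F}}\mathbf{S}_0\mathbf{V}_{\mathbf{\tilde F}}^{\top}$, and $\mathcal{P}_{\mathbf{\tilde F}}=\mathbf{V}_{\mathbf{\tilde F}}\,\mathrm{diag}(\mathbf{0},\mathbf{I})\,\mathbf{V}_{\mathbf{\tilde F}}^{\top}$ with a leading zero block of size $\mathrm{rank}(\mathbf{\tilde F})$, so that the interior factors $\mathbf{V}_{\mathbf{\tilde F}}^{\top}\mathbf{V}_{\mathbf{\tilde F}}$ cancel and every matrix reduces to a block matrix acting between the ``range'' and ``null'' coordinates of $\mathbf{\tilde F}$. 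Because $\mathbf{S}_{\mathbf{\tilde F}}$ is rectangular and annihilates all but its first $\mathrm{rank}(\mathbf{\tilde F})$ rows — where it acts as the invertible block $\mathbf{S}_{\tilde F,00}$ — every product below is supported on a single $(\mathrm{range},\mathrm{null})$ block.

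Carrying out the algebra, $A=\mathbf{U}_{\mathbf{\tilde F}}M_A\mathbf{V}_{\mathbf{\tilde F}}^{\top}$ with $M_A$ having the lone nonzero block $\mathbf{S}_{\tilde F,00}\mathbf{S}_{0,01}$ in the $(\mathrm{range},\mathrm{null})$ position. Since $\mathbf{U}_{\mathbf{\tilde F}},\mathbf{V}_{\mathbf{\tilde F}}$ are orthogonal, $A^{+}A=\mathbf{V}_{\mathbf{\tilde F}}(M_A^{+}M_A)\mathbf{V}_{\mathbf{\tilde F}}^{\top}$, and the pseudoinverse of a matrix with a single nonzero off-diagonal block places $(\mathbf{S}_{\tilde F,00}\mathbf{S}_{0,01})^{+}$ in the transposed slot, whence $M_A^{+}M_A=\mathrm{diag}\bigl(\mathbf{0},\,(\mathbf{S}_{\tilde F,00}\mathbf{S}_{0,01})^{+}(\mathbf{S}_{\tilde F,00}\mathbf{S}_{0,01})\bigr)$. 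Left-multiplication by the invertible $\mathbf{S}_{\tilde F,00}$ preserves the row space, so $(\mathbf{S}_{\tilde F,00}\mathbf{S}_{0,01})^{+}(\mathbf{S}_{\tilde F,00}\mathbf{S}_{0,01})=\mathbf{S}_{0,01}^{+}\mathbf{S}_{0,01}$, the orthogonal projector onto the row space of $\mathbf{S}_{0,01}$; hence $\mathbf{I}-A^{+}A=\mathbf{V}_{\mathbf{\tilde F}}\,\mathrm{diag}\bigl(\mathbf{I},\,\mathbf{I}-\mathbf{S}_{0,01}^{+}\mathbf{S}_{0,01}\bigr)\,\mathbf{V}_{\mathbf{\tilde F}}^{\top}$.

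Next I would expand $\mathbf{S}_0^{2}$ in blocks to get $\mathbf{\tilde F}\mathbf{F}_0^{2}\mathcal{P}_{\mathbf{\tilde F}}=\mathbf{U}_{\mathbf{\tilde F}}M_B\mathbf{V}_{\mathbf{\tilde F}}^{\top}$, where $M_B$ is again supported on the $(\mathrm{range},\mathrm{null})$ block, there equal to $\mathbf{S}_{\tilde F,00}\bigl(\mathbf{S}_{0,00}\mathbf{S}_{0,01}+\mathbf{S}_{0,01}\mathbf{S}_{0,11}\bigr)$. Right-multiplying by $\mathbf{I}-A^{+}A$ multiplies that block on the right by $\mathbf{I}-\mathbf{S}_{0,01}^{+}\mathbf{S}_{0,01}$, and the $\mathbf{S}_{0,00}\mathbf{S}_{0,01}$ summand drops out since $\mathbf{S}_{0,01}(\mathbf{I}-\mathbf{S}_{0,01}^{+}\mathbf{S}_{0,01})=\mathbf{S}_{0,01}-\mathbf{S}_{0,01}\mathbf{S}_{0,01}^{+}\mathbf{S}_{0,01}=\mathbf{0}$, leaving
\[
\Delta \mathbf{F}^{(2)}=\mathbf{U}_{\mathbf{\tilde F}}\begin{bmatrix}\mathbf{0}&\mathbf{S}_{\tilde F,00}\,\mathbf{S}_{0,01}\mathbf{S}_{0,11}\bigl(\mathbf{I}-\mathbf{S}_{0,01}^{+}\mathbf{S}_{0,01}\bigr)\\ \mathbf{0}&\mathbf{0}\end{bmatrix}\mathbf{V}_{\mathbf{\tilde F}}^{\top}.
\]
Since $\mathbf{U}_{\mathbf{\tilde F}},\mathbf{V}_{\mathbf{\tilde F}}$ are orthogonal and $\mathbf{S}_{\tilde F,00}$ invertible, the right-hand side vanishes if and only if $\mathbf{S}_{0,01}\mathbf{S}_{0,11}(\mathbf{I}-\mathbf{S}_{0,01}^{+}\mathbf{S}_{0,01})=\mathbf{0}$; in particular the hypothesis forces $\Delta\mathbf{F}^{(2)}=\mathbf{0}$, which is the claim.

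The main obstacle I anticipate is the pseudoinverse bookkeeping: verifying that a block matrix with a single nonzero off-diagonal block has pseudoinverse given by the block's pseudoinverse in the transposed position, and that $(CD)^{+}(CD)=D^{+}D$ whenever $C$ is invertible — equivalently, that $CD$ and $D$ share a row space. Careful tracking of the rectangular shape of $\mathbf{S}_{\mathbf{\tilde F}}$, which is what collapses every product onto the single $(\mathrm{range},\mathrm{null})$ block, is also needed; the rest is routine block multiplication once the SVD substitution is installed.
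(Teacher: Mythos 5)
Your proposal is correct and follows essentially the same route as the paper's proof: substitute the SVD, reduce everything to block matrices in the $\mathbf{V}_{\mathbf{\tilde F}}$ coordinates, identify $\mathbf{I}-A^{+}A$ as $\mathrm{diag}(\mathbf{I},\,\mathbf{I}-\mathbf{S}_{0,01}^{+}\mathbf{S}_{0,01})$, and observe that the $\mathbf{S}_{0,00}\mathbf{S}_{0,01}$ term is annihilated while the $\mathbf{S}_{0,01}\mathbf{S}_{0,11}$ term survives. The only difference is that you explicitly justify the step $(\mathbf{S}_{\tilde F,00}\mathbf{S}_{0,01})^{+}(\mathbf{S}_{\tilde F,00}\mathbf{S}_{0,01})=\mathbf{S}_{0,01}^{+}\mathbf{S}_{0,01}$ via the shared-row-space argument, which the paper merely asserts from the full rank of $\mathbf{S}_{\tilde F,00}$.
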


The results presented herein offer insightful interpretations. Firstly, if a model exhibits 1-period finite dependence, denoted by $\mathbf{S}_{0,01} = 0$, it inherently satisfies the criteria for 2-period finite dependence. In other words, the condition for 2-period finite dependence is automatically met in such cases.
On the other hand, if $\mathbf{S}_{0,01}$ possesses full rank, the criterion for finite dependence is likewise fulfilled. It is typically observed that when the state transition is stochastically dependent on the actions, $\mathbf{S}_{0,01}$ is of full rank. This observation underscores a fundamental relationship between stochastic state transitions and the structural rank of $\mathbf{S}_{0,01}$, linking decision-making processes to the concept of finite dependence in dynamic models.

\textbf{[Add an example that $\mathbf{S}_{0,01}$ is of full rank.]}

\subsection{Computational details: overcoming the curse of dimensionality}\label{sec:computation_detail}
When dealing with sequential mapping, we can prevent the curse of dimensionality by avoiding the recursive multiplication of large matrices of $\mathbf{\tilde F}$ when solving for weight. The approach involves dividing the state into two parts: $\omega$ represents the states that are affected by the decision, while $z$ represents the exogenous state, which means that the state's transition is not influenced by the decision.
By the definition of $ \mathbf{\check w}_{t+1} $, we have the following formula and calculation result: 

\[
\begin{split}
  \mathbf{\check{w}}_{t+1}
  & = - \mathbf{\tilde{F}} \mathbf{F}_0 (\mathbf{\tilde{F}})^+ = \underbrace{-(\mathbf{\tilde{F}}_{\omega} \mathbf{F}_{\omega,0} \mathbf{\tilde{F}}_{\omega}^+ )}_{\mathbf{\check{w}}_{\omega, t+1}} \otimes \mathbf{F}_{z}, \\
  \mathbf{\tilde{F}} \mathbf{F}(\mathbf{w}) & = \mathbf{\tilde{F}}_{\omega} \mathbf{F}_{\omega,0} ( \mathbf{I} - (\mathbf{\tilde{F}}_{\omega}^+ \mathbf{\tilde{F}}_{\omega}) ) \otimes \mathbf{F}_{z},
\end{split}
\]
where $(\mathbf{\tilde F}^\top )^+$ is the Moore–Penrose inverse of matrix $\mathbf{\tilde F}^\top $.
The computation of $(\mathbf{\tilde F}^\top )^+$ is potentially costly, so we make use of this assumption. 
Partition the state variables into $(\omega,z)$ where the transition of $\omega$ depends on the action $d$ and the transition of $z$ does not. 

Then $\mathbf{F}_d = \mathbf{F}_{\omega, d} \otimes \mathbf{F}_{z}$. Then $\mathbf{\tilde F} = [\mathbf{F}_d - \mathbf{F}_0: d \in \mathcal{D}, d \neq 0] = [ ( \mathbf{F}_{\omega, d} - \mathbf{F}_{\omega, 0}) \otimes \mathbf{F}_{z}: d \in \mathcal{D}, d \neq 0] $. 
Let $ \mathbf{\tilde F}_{\omega} \equiv [ ( \mathbf{F}_{\omega, d} - \mathbf{F}_{\omega, 0})
: d \in \mathcal{D}, d \neq 0] $, then $(\mathbf{\tilde F}^\top )^+ = (\mathbf{\tilde F}_{\omega}^\top )^+ \otimes (\mathbf{F}_{z}^\top )^+ $.
Therefore, it is sufficient to only compute $(\mathbf{\tilde F}_{\omega}^\top )^+$. 

Plug in recursively to obtain $ \mathbf{\check w}_{t+2} $, 
\[\begin{split}
 \mathbf{\check w}_{t+2} & = - \big( \mathbf{\tilde F} \mathbf{F}(\mathbf{w}_{t+1}) \big) \mathbf{F}_0  (\mathbf{\tilde F})^+ = - \Big(\mathbf{\check w}_{t+1} \mathbf{\tilde F} + \mathbf{\tilde F} \mathbf{F}_0 \Big) \mathbf{F}_0  (\mathbf{\tilde F})^+ \\ 
 &= \underbrace{-  \Big(\big( \mathbf{\tilde F}_{\omega}  \mathbf{ F}_{\omega,0} +   \mathbf{\check w}_{\omega, t+1}  \mathbf{\tilde F}_{\omega} \big)  \mathbf{ F}_{\omega,0} \mathbf{\tilde F}_{\omega}^+  \Big)}_{ \mathbf{\check w}_{\omega, t+2} } \otimes (\mathbf{F}_z)^2.
 \end{split}\]
Plug in to get the bias correction mapping for $2$ period ahead value function:
\[
\begin{split}
\beta^3 \mathbf{\tilde F} \mathbf{F}(\mathbf{w}_{t+1}) \mathbf{F}(\mathbf{w}_{t+2}) & =  \beta^3 (\mathbf{\check w}_{t+2} \mathbf{\tilde F} + \mathbf{\check w}_{t+1} \mathbf{\tilde F}  \mathbf{F}_0 + \mathbf{\tilde F} (\mathbf{F}_0 )^2 ) \\ 
&= \beta^3 \Big( \mathbf{\check w}_{\omega, t+2} \mathbf{\tilde F}_{\omega} + \mathbf{\check w}_{\omega, t+1}  \mathbf{\tilde F}_{\omega} \mathbf{ F}_{\omega,0} + \mathbf{\tilde F}_{\omega} (\mathbf{ F}_{\omega,0})^2 \Big) \otimes (\mathbf{F}_z)^3
\end{split}
\]

More generally,
\[\mathbf{\tilde F} \mathbf{F}(\mathbf{w}_{t+1}), \ldots,  \mathbf{F}(\mathbf{w}_{t+\rho} ) =  \sum_{s=1}^{\rho} \mathbf{\check w}_{t+s} \mathbf{\tilde F} (\mathbf{F}_0)^{\rho-s} + \mathbf{\tilde F} (\mathbf{F}_0)^{\rho}. \]
Therefore, 
\[ \mathbf{\check w}_{t+\rho} = \underbrace{- \Big( \sum_{s=1}^{\rho-1} \mathbf{\check w}_{\omega, t+s} \mathbf{\tilde F}_{\omega} (\mathbf{F}_{\omega, 0})^{\rho-s} +  \mathbf{\tilde F}_{\omega} (\mathbf{F}_{\omega, 0})^{\rho}\Big) \mathbf{\tilde F}_{\omega}^+ }_{ \mathbf{\check w}_{\omega, t+\rho} }\otimes \mathbf{F}_z^{\rho}.\]
Then \[\mathbf{\tilde F} \mathbf{F}(\mathbf{w}_{t+1}), \ldots,  \mathbf{F}(\mathbf{w}_{t+\rho}) = \Big(\sum_{s=1}^{\rho} \mathbf{\check w}_{\omega, t+s}  \mathbf{\tilde F}_{\omega} (\mathbf{F}_{\omega,0})^{\rho-s} + \mathbf{\tilde F}_{\omega} (\mathbf{F}_{\omega,0})^{\rho} \Big) \otimes \mathbf{F}_Z^{\rho + 1}. \]

\section{Estimation}\label{sec:4_estimator}
We aim to estimate structural parameters, such as preferences, transition probabilities, and the discount factor $\beta$, for a dynamic discrete choice model with an infinite horizon. Agents, who make choices from a discrete set $\mathcal{D}$, seek to maximize their expected utility, which is the discounted sum of time-separable one-period utility functions $U_t(d_t,s_t)$. These decisions are based on state variables $s_t$ and take into account a Markovian transition probability $p_t(s_{t+1}|d_t, s_t)$.

The panel data for $N$ individuals comprises actions $d_t$ and a subset of state variables $x_t$. Here, $s_t$ includes both public information variables $x_t$, observable to researchers and agents, and private information variables $\epsilon_t$, known only to the agents. Our goal is to reliably estimate the structural parameter $\theta$ using this data, which spans $T_{\text{data}}$ periods.

% \begin{assumption}[Infinite Horizon under Stationarity] \label{assumption:5_stationarity}
% 	An agent seeks to maximize expected total discounted payoffs, $E \Big[ \sum_{\tau= t}^\infty \beta^{\tau - t} \left( u( d_{\tau},x_{\tau}) + \boldsymbol {\epsilon}_{\tau} \right)|x_t\Big]$, with $\beta\in(0,1)$. Both the payoff function $u(d,x)$ and the transition function $f(x' | x,d)$ are time-invariant for all $(d,x,x')\in \mathcal{D}\times \mathcal{X}^2$.
% \end{assumption}

% Under Assumption \ref{assumption:5_stationarity}, we adapt notations such as $\boldsymbol {V}$, $\tilde{\boldsymbol {v}}$, and $\boldsymbol {u}_d$ to represent stationary versions of $\boldsymbol {V}_t$, $\tilde{\boldsymbol {v}}_t$, and $\boldsymbol {u}_{d,t}$, utilizing $u(d,x)$ and $f(x' | x,d)$ instead of their time-specific counterparts. The corresponding Bellman equation in probability space simplifies to \(\boldsymbol {V} = \max_{\boldsymbol {p}  \in \mathcal{P}}\ \bar{\boldsymbol {U}} ^{\boldsymbol {p} }  + \beta \bar{\boldsymbol {F}} ^{\boldsymbol {p}} \boldsymbol {V}\).

% For vectors indexed by decisions, we define $ \boldsymbol {u}(\{d_{i}\}_{i=1}^{X})$, $\boldsymbol {e}^{\boldsymbol {p}}(\{d_{i}\}_{i=1}^{X})$, and $\boldsymbol {F}(\{d_{i}\}_{i=1}^{X})$, which reduce to $\boldsymbol {u}_d$, $\boldsymbol {e}^{\boldsymbol {p}}_d$, and $\boldsymbol {F}_d$ respectively when $d_i=d$ for all $i$.

% To facilitate the estimation, we focus on two assumptions related to the role of public and private information state variables.

\begin{assumption}[Type-I extreme value] \label{assumption:T1EV}
	The private information variables $\epsilon_t = \{ \epsilon_t(d): d \in \mathcal{D} \}$ is independently and identically distributed with a Type-I extreme value distribution, where $g_t(\epsilon_t|x_t) = \Pi_{d \in \mathcal{D}} \exp \{ - \epsilon_t(d) - \exp ( - \epsilon_t(d)) \}$. 
\end{assumption} 
  
\begin{example} [Hotz-Miller Inversion under Type I Extreme Value assumption]\label{example:hm_inversion}
	When  Assumption \ref{assumption:T1EV} holds,  the OCP mapping and its inverse mapping have a closed-form expression as follows:
	\[\begin{split}
		&\boldsymbol {\Lambda}_t(\tilde{\boldsymbol  v}_t(x))(d,x) = \frac{\exp(\tilde v_t(d,x))}{1+\sum_{d'=1}^D \exp(\tilde v_t(d',x))}\quad\text{for $d\in \mathcal{D}$}\quad\text{and}\quad \\ 
		& \boldsymbol {\Lambda}_t^{-1}(\boldsymbol  p_t(x)) = \log\left( (1 - \boldsymbol  p_t(x) \iota_{D}\t )^{-1}  \boldsymbol { p}_t(x)\right)
	\end{split}
	\]
	where  
	$\iota_D= (1,...,1)\t$ are  $D$-dimensional identity matrix and unit vector, respectively. Furthermore, $$\psi_d(\boldsymbol  p_t(x)) =  \gamma - \ln p_t(d,x)\quad\text{for $d\in \mathcal{D}$},$$  where $\gamma \approx 0.5772$ is the Euler constant.
\end{example}

%\textbf{[Provide examples with the GEV and nested logit.]}

\subsection{Conditional Choice Probabilities estimator under $\rho$-period finite dependence}

We  develop a computationally attractive estimator for non-stationary models by using the representation of value differences in equation (\ref{eq:rho-dependence}) together with the mapping from value differences to the conditional choice probabilities in (\ref{eq:ocp_mapping}).

The characterization of finite dependence discussed in Section \ref{sec:finite_dependence} can be numerically exploited to develop a computationally attractive estimator. 
Essentially, we don't need to solve for $\mathbf{w}_{x}$, because the conditional choice probability $\mathbf{p}(x_t, d_t) = \Lambda(\tilde{v}(x_t, d_t))$, so we only need to have a convenient expression for $\tilde{v}(x_t, d_t)$.

We assume that $x_t$ is written in a vector form  with $K$ variables as $x_t=(x_{1t},...,x_{Kt})\t$, where each element $x_t^k$ for $k=1,...,K$ has a finite support. For identification, we impose normalization condition that $u_t(x_t,0)=0$ for all $x_t \in\mathcal{X}$ and we further assume that
the instantaneous utility function is a linear function of $(x_{1t},...,x_{Kt})$:
\[
\tilde u_t(x_t,d) = u_t(x_t,d) = \theta_0^d + \theta_1^d x_{1t} + ... + \theta_K^d x_{Kt}:= x_t\t \theta^d
\]
with $\theta^d = (\theta_1^d,...,\theta_K^d)\t$ for $d=1,...,D$.

We estimate $\theta$ by minimizing the log-likelihood function.
\[\hat \theta^{ \rho-\text{FD} } = \arg\max \sum_{i=1}^N\sum_{t=1}^{T_{ {data}}} \log  \boldsymbol {\Lambda}(\tilde{\boldsymbol {v}}_t^{ \rho-\text{FD}}(x_t;\boldsymbol\theta))(d_t, x_t),
\]
where $\boldsymbol {\Lambda}(\cdot)$ is the OCP mapping and $\tilde{\boldsymbol {v}}^{ \rho-\text{FD}}(x_t)$ is a vector of \(\rho + 1\)-period-finite-dependent characterization of value difference: $\tilde{\boldsymbol {v}}^{ \rho-\text{FD}}(x_t) =\{ \tilde{ {v}}^{ \rho-\text{FD}}(x, d): d \in \mathcal{D} / \{ 0\} \}$.

Then the near-finite-dependent characterization minimizes the impact of omitting the value function two periods ahead.  
\begin{align} 
	\tilde{  v}_t^{ \rho-\text{FD}}(x_t, d_t)   =   \tilde u_t(x_t, d_t)  + \sum_{\tau=1}^{\rho}\beta^{\tau}  & \sum_{x_{t+\tau}\in\mathcal{X}} \tilde\kappa_{t+\tau-1}^{\mathbf{W}_{t+\tau-1}}(x_{t+\tau}|x_t, d_t) \nonumber \\ & \times  \left[ \bar u^{\mathbf{w}_{t+\tau}} _{t+\tau}(x_{t+\tau}|x_t, d_t) +\bar \psi^{\mathbf{w}_{t+\tau}}(\boldsymbol  p_{t+\tau}(x_{t+\tau})|x_t, d_t)\right] .
\end{align} 
Suppose that we estimate the conditional choice probabilities $\boldsymbol p_t$ and transition functions $f_t(x_{t+1}|x_t, d_t)$ in the first stage, the estimators of which are indicated by $\hat{\boldsymbol p}_t$ and $\hat f_t(x_{t+1}|x_t, d_t)$, respectively. Furthermore, suppose that given $\hat f_t(x_{t+1}|x_t, d_t)$, we have estimated the decision weights $\mathbf{W}_{t+\tau}$ under which a model exhibits $\rho$-period finite dependence, following a procedure discussed in the next section. Let $\hat{\mathbf{W}}_{t+\tau}(x_t,...,x_{t+\tau},d_t)=(\hat{\mathbf{w}}_{t}(x_t,x_{t+1},d_t),...,\hat{\mathbf{w}}_{t+\tau}(x_{t+\tau} | x_t, d_t))$ be the estimated weights. 

Define
\begin{align*}
& \bar  u^{\hat{\mathbf{w}}_{t+1} } _{t+1}(x_{t+1} | x_t, d_t):= \sum_{d' \in \mathcal{D}}\hat{ \mathrm{w}}_{t+1}(x_{t+1}, d'|x_t, d_t)    u_{t+1}(x_{t+1},d'),\\
&\bar\psi^{\hat{\mathbf{w}}_{t+1} }(\boldsymbol  p_{t+1}(x_{t+1}) | x_t, d_t):= \sum_{d' \in \mathcal{D}} \hat{\mathrm{w}}_{t+1}(x_{t+1}, d'|x_t, d_t) \psi_{d'}(\boldsymbol  p_{t+1}(x_{t+1})), \\
& \tilde{\kappa}_{t+\tau}^{\mathbf{W}_{t+\tau}}(x_{t+\tau+1} | x_t, d_t) := \left\{
\begin{array}{ll}
\tilde{f}_t(x_{t+1} | x_t, d_t), & \text{for } \tau = 0, \\
\sum_{x_{t+\tau} \in \mathcal{X}} \bar{f}_{t+\tau}^{\mathbf{w}_{t+\tau}}(x_{t+\tau+1} | x_{t+\tau}, x_t, d_t ) \\
\quad \times \tilde{\kappa}_{t+\tau-1}^{\mathbf{W}_{t+\tau-1}}(x_{t+\tau} | x_t, d_t), & \text{for } \tau = 1, \ldots, \mathcal{T},
\end{array}
\right. \\
&\text{with}\quad \bar f_{t+1}^{\hat{\mathbf{w}}_{t+1}}(x_{t+2}|x_{t+1}, x_t, d_t):=\sum_{d' \in \mathcal{D}} \hat{\mathrm{w}}_{t+1}(x_{t+1}, d'|x_t, d_t) f_{t+1}(x_{t+2}|x_{t+1},d').
\end{align*}

In this subsection, for brevity, we assume that both the transition function and the decision weights $\mathbf{W}_{t+\rho}$ under which a model exhibits $\rho$-period-finite dependence are known to econometricians. Furthermore, for identification, we assume that $u_t(x_t,0)=0$ for all $x_t\in\mathcal{X}$ and the utility function is specified in linear in parameters as:
\begin{equation}
     \tilde u_t(x_t,d) = u_t(x_t,d) = x_t\t \bs\theta^{d}\quad\text{for $d\in\{1,2,...,D\}$.}
\end{equation}
 \begin{example}[Type-1 Extreme Value continued]

For T1EV, $\psi_d(\boldsymbol  p_t(z)) =  \gamma - \ln p_t(d,z)$ with $\gamma \approx 0.5772$. Suppose that we have solved the weight $\mathbf{W}_{t+\rho}$ under which a model exhibits the dependence of the $\rho + 1$ period. 
Then the finite dependence estimator estimator is constructed as 
\[\hat \theta^{ \rho-\text{FD} } = \arg\max \sum_{i=1}^N\sum_{t=1}^{T_{ {data}}} \log  \boldsymbol{\Lambda}( \mathbf{H}^{\mathbf{W}_{t+\rho}}(d_t, x_t)^\top \boldsymbol{\theta}_{d_t} + \mathbf{h}^{\mathbf{W}_{t+\rho}}(d_t, x_t) ),
\]
where \begin{align*}
    \mathbf{H}^{\mathbf{W}_{t+\rho}}(d_t, x_t) & = x_t + \sum_{\tau=1}^{\rho}\beta^{\tau} \sum_{x_{t+\tau}\in\mathcal{X}} \tilde\kappa_{t+\tau-1}^{\mathbf{W}_{t+\tau-1}}(x_{t+\tau}|x_t, d_t) \bar {\mathbf{x}}_{t+\tau}(x_t) \\
    \mathbf{h}^{\mathbf{W}_{t+\rho}}(d_t, x_t) & = \sum_{\tau=1}^{\rho}\beta^{\tau} \sum_{x_{t+\tau}\in\mathcal{X}} \tilde\kappa_{t+\tau-1}^{\mathbf{W}_{t+\tau-1}}(x_{t+\tau}|x_t, d_t) \bar \psi^{\mathbf{w}_{t+\tau}}(\boldsymbol  p_{t+\tau}(x_{t+\tau})|x_t, d_t) .\\
\end{align*}

\end{example}

\section{Monte Carlo simulation}\label{sec:5_MCsimulation}

In our simulation, we analyze an entry/exit problem within a dynamic framework that can be adapted to both stationary and non-stationary environments, with a focus on non-stationary transition dynamics. Our model extends \cite{Aguirregabiria2016} by incorporating past entry decisions into profitability, represented by the structural vector $\theta = (\theta_0^{ {VP}}, \theta_1^{ {VP}}, \theta_2^{ {VP}},\theta_0^{ {FC}},\theta_1^{ {FC}},\theta_0^{ {EC}},\theta_1^{ {EC}})$. This approach is further refined with lagged productivity effects as in \cite{kasahara2018estimation}, which challenges the finite dependence structure of the model.
% The summary of Monte Carlo
In a Monte Carlo simulation designed to compare the effectiveness of estimation methods in various environments, the FD and FD2 methods, which are alternatives to the traditional Hotz-Miller estimator, demonstrated substantially quicker computation times, particularly in scenarios with large state spaces. This advantage is crucial in stationary environments. For models lacking 1-period finite dependence, such as those with absorbing states discussed in \cite{arcidiacono2011conditional}, FD2 remains applicable due to its reliance on 2-period finite dependence. In non-stationary environments, where the Hotz-Miller (HM) estimator is unsuitable, the FD class of estimators becomes essential. The FD2 estimator excels in these settings, exhibiting minimal bias. In contrast, both the FD method and FD without the $x_t$-dependent weight characterization are less effective, as they tend to produce biased estimates.

The state variables are $x = (z_1,z_2,z_3,z_4,\omega,y)$, where $y$ signifies a firm's market presence, and $\epsilon$ captures unobserved heterogeneity. Firms, observing $(x,\epsilon)$, decide whether to operate ($d = 1$) or not ($d = 0$).

The state at time $t$ is $(z_{1,t},\ldots,z_{4,t},\omega_t,y_t,\epsilon_t)$, with $y_t$ determined by the previous period's action $a_{t-1}$. The firm's flow payoff, as described in equation (\ref{eq:stationary_flow_payoff}), comprises variable profits ($VP$), fixed ($FC$), and entry costs ($EC$). Operating firms pay $FC$ and, if entering, additional $EC$.
\begin{equation}
	\label{eq:stationary_flow_payoff}
	\begin{split}
		u( d_t ,x_t;\theta) & =  d_t(  { {VP}}_t -  {EC}_t -  {FC}_t) \\
		\text{ where }  { {VP}}_t & =  \exp(\omega )[\theta_0^{ { {VP}}} + \theta_1^{ { {VP}}} z_{1t} + \theta_2^{ { {VP}}} z_{2t}] \\
		 {FC}_t & =  [\theta_0^{ {FC}} + \theta_1^{ {FC}}z_{3t}] \\
		 {EC}_t & = (1 - y_t) [\theta_0^{ {EC}} + \theta_1^{ {EC}}z_{4t}].
	\end{split}
\end{equation}

\begin{table}[!htbp]
	\caption{Parameters in Data Generating Process}
	\centering
    \small
	\begin{tabular}{l|rrr}
		\toprule
		Payoff Parameters:  & $\theta_0^{ { {VP}}}  = 0.5$ & $ \theta_1^{ { {VP}}} = 1.0 $ &  $\theta_2^{ { {VP}}}  = - 1.0 $ \\
		~ &  $\theta_0^{ {FC}}  = 0.5$ & $ \theta_1^{ {FC}} = 1.0 $ & ~  \\
		~ &  $\theta_0^{ {EC}}  = 1.0$ & $ \theta_1^{ {EC}} = 1.0 $ & ~  \\
		\midrule
		Each $z_k$ state variable & $z_{kt}$ is AR(1), & $\gamma_0^{k} = 0,$ & $\gamma_1^{k} = 0.9$, $\sigma$ = 1\\
		Productivity & $\omega_t$ is AR(1), & $\gamma_0^{\omega} = 0,$ & $\gamma_1^{\omega} = 0.9$, $\sigma$ = 1\\
		Discount Factor & $\beta = 0.95$.\\
		\bottomrule
	\end{tabular}
	\label{tab:DGP_parameters}
\end{table}

The exogenous shocks $(z_1,z_2,z_3,z_4,\omega)$ independently follow $AR(1)$ processes. We apply the method by \cite{tauchen1986finite} to discretize these processes into a finite state space. Let $K_z$ and $K_o$ represent the number of discrete points for $z_{j}$ and $\omega$, respectively. The resulting state space has a dimensionality of $X = 2 * K_z^4 * K_o$.
We let $z_1, \ldots, z_4$'s transition density be independent of the action chosen.
To discretize each $z_j$, we define $K_z$ support points with interval widths $w_j^{(k)} = z_j^{(k+1)} - z_j^{(k)}$. The latent variable $\tilde{z}_{jt}$, following the process $\tilde{z}_{jt} = \gamma_0^j + \gamma_1^j \tilde{z}_{j,t-1} + e_{jt}$ with $e_{jt} \overset{i.i.d}{\sim} \mathcal{N}(0,\sigma_j^2)$, is discretized by determining its normal distribution with mean $\mu_k = \frac{\gamma_0^k}{1 - \gamma_1^k}$ and variance $\sigma_k^2 = \frac{\sigma^k}{1 - (\gamma_1^k)^2}$. The discretization grid is established using quantiles at $0.5/K_z$ and $1 - 0.5/K_z$ of this distribution.

We model the productivity shock $\tilde{\omega}_{t}$ as a function of past actions, following the process $\tilde{\omega}_{t} = \gamma_{0t}^{\omega} + d_{t-1} \gamma_a + \gamma_1^{\omega} \tilde{\omega}_{t-1} + e_{jt}$, where the disturbance term $e_{jt}$ is independently and identically distributed as $\mathcal{N}(0,\sigma_j^2)$. In the stationary context, $\gamma_{0t}^{\omega}$ is set to zero for all $t$. Conversely, in the non-stationary scenario, $\gamma_{0t}^{\omega}$ varies with time, assuming values $[-0.8, 0.8, 0, -0.3]$ for $t = 1, 2, 3, 4$, respectively. Because of the nature of the transition matrix, I fix the maximum to be 1 and the minimum of the grid point to be -1.

\subsection{Monte Carlo simulation weight solving time}

Suppose we are interested in the broadening of the weight concept to encompass multiple periods. In this section, we engage in a comparative analysis of the time required for weight computation utilizing the proposed algorithm against conventional time measures. Specifically, the "Time" columns in the table delineate the duration necessary for computing the weight, as well as for resolving the functions $\mathbf{H}$ and $\mathbf{h}$. To establish a benchmark for comparison, we also document the time expended in executing the Hotz-Miller (HM) inversion process. It is noteworthy that the computation of the weight and the subsequent linear components of the estimators is relatively cost-efficient compared to an HM inversion, thereby underscoring the computational attractiveness of our proposed method.

Furthermore, in the "Norm" columns, we report the largest singular values of the perturbation $\tilde{\mathbf F}^{(\rho)}$ for $\rho = 1,2$. This metric can be construed as an indication of the influence that the $\rho+1$ value function exerts on the Conditional Choice Probability (CCP) estimator. The norm, when calculated with $\rho=1$, is not ideally minimal; however, for $\rho = 2$, it approaches zero, suggesting a negligible impact when the full solution is contemplated. 

When the weight is resolved sequentially, first by calculating $\mathbf{w}_{t+1}$ and then $\mathbf{w}_{t+2}$ the norm is reduced compared to a single period analysis but does not achieve a value sufficiently close to zero. This finding is instrumental in illustrating the trade-offs between computational efficiency and accuracy within the context of extending the weight to multiple periods. 

\begin{table}[htbp]
    \centering
    \scriptsize
    \caption{Timing Analysis and Weight Vector Computation for FD Algorithm}
    \label{tab:sample-data}
    \begin{threeparttable}    
    \begin{tabular}{l @{\hspace{1.5cm}} rrrrr @{\hspace{1.5cm}} rr}
        \toprule
        N state &  \multicolumn{5}{c}{Time}      &  \multicolumn{2}{c}{Norm} \\
        & Solving & HM Inverse & FD 1      & compute $\mathbf{w}$ & FD 2  & $\tilde{\mathbf F}^{(1)}$ & $\tilde{\mathbf F}^{(2)}$ \\
        \cmidrule(r){2-6} 
        \cmidrule(r){7-8} 
        \multicolumn{8}{c}{Finite Dependence $\gamma = 0$} \\ \midrule
        
        64& 0.022  & 0.001& 0.001  & 0.001    & 0.001       & 3.27E-17& 3.78E-17         \\
        96& 0.027  & 0.000& 0.001  & 0.001    & 0.002       & 1.94E-17& 1.71E-17         \\
        128& 0.047  & 0.001& 0.000  & 0.000    & 0.002       & 5.83E-17& 6.69E-17         \\
        160& 0.046  & 0.001& 0.001  & 0.001    & 0.002       & 6.39E-17& 4.61E-17         \\
        324& 0.118  & 0.006& 0.002  & 0.001    & 0.004       & 3.37E-17& 3.93E-17         \\
        486& 0.923  & 0.022& 0.006  & 0.002    & 0.027       & 1.99E-17& 1.78E-17         \\
        648& 2.011  & 0.064& 0.013  & 0.003    & 0.051       & 6.00E-17& 6.95E-17         \\
        810& 2.328  & 0.073& 0.020  & 0.007    & 0.072       & 6.58E-17& 4.79E-17         \\
        1024 & 3.948  & 0.122& 0.029  & 0.008    & 0.114       & 3.44E-17& 4.04E-17         \\
        1536 & 9.555  & 0.401& 0.064  & 0.014    & 0.265       & 2.03E-17& 1.83E-17         \\
        2048 & 16.858 & 0.859& 0.111  & 0.028    & 0.445       & 6.13E-17& 7.15E-17         \\
        2560 & 25.840 & 1.451& 0.184  & 0.044    & 0.683       & 6.72E-17& 4.93E-17         \\
        2500 & 24.506 & 1.285& 0.165  & 0.046    & 0.645       & 3.48E-17& 4.11E-17         \\
        3750 & 56.645 & 3.810& 0.382  & 0.103    & 1.445       & 2.06E-17& 1.86E-17         \\
        5000 & 96.439 & 8.255& 0.717  & 0.183    & 2.528       & 6.19E-17& 7.27E-17         \\
        6250 & 138.782& 15.692& 1.088  & 0.311    & 3.938       & 6.80E-17& 5.01E-17         \\
        15552& 883.166& 177.803 & 6.113  & 1.710    & 24.993      & 8.65E-17& 8.61E-17         \\
        18144& 1270.952& 260.719 & 8.800  & 2.503    & 33.075      & 7.25E-17& 7.63E-17      \\
        \midrule
        \multicolumn{8}{c}{Non-finite Dependence $\gamma = 0.5$} \\
        \midrule
        
        64   & 0.022     & 0.000   & 0.000                 & 0.000       & 0.001                      & 1.45E-01               & 5.03E-17                        \\
        96   & 0.031     & 0.000   & 0.001                 & 0.001       & 0.001                      & 1.77E-01               & 7.98E-17                        \\
        128  & 0.044     & 0.001   & 0.000                 & 0.000       & 0.001                      & 1.95E-01               & 4.99E-16                        \\
        160  & 0.052     & 0.002   & 0.000                 & 0.000       & 0.002                      & 1.99E-01               & 6.04E-14                        \\
        324  & 0.094     & 0.005   & 0.002                 & 0.001       & 0.004                      & 1.49E-01               & 5.22E-17                        \\
        486  & 0.862     & 0.025   & 0.008                 & 0.003       & 0.026                      & 1.82E-01               & 8.30E-17                        \\
        648  & 1.532     & 0.042   & 0.012                 & 0.004       & 0.049                      & 2.01E-01               & 5.19E-16                        \\
        810  & 2.392     & 0.066   & 0.023                 & 0.006       & 0.088                      & 2.05E-01               & 6.28E-14                        \\
        1024 & 4.128     & 0.139   & 0.032                 & 0.008       & 0.119                      & 1.53E-01               & 5.37E-17                        \\
        1536 & 10.325    & 0.414   & 0.070                 & 0.020       & 0.232                      & 1.86E-01               & 8.53E-17                        \\
        2048 & 16.992    & 0.856   & 0.109                 & 0.026       & 0.421                      & 2.05E-01               & 5.34E-16                        \\
        2560 & 27.044    & 1.475   & 0.170                 & 0.045       & 0.657                      & 2.09E-01               & 6.46E-14                        \\
        2500 & 24.567    & 1.275   & 0.164                 & 0.046       & 0.655                      & 1.54E-01               & 5.46E-17                        \\
        3750 & 55.723    & 3.795   & 0.378                 & 0.097       & 1.456                      & 1.88E-01               & 8.67E-17                        \\
        5000 & 96.322    & 8.304   & 0.751                 & 0.188       & 2.559                      & 2.07E-01               & 5.42E-16                        \\
        6250 & 127.850   & 15.018  & 1.065                 & 0.282       & 3.874                      & 2.11E-01               & 6.56E-14                        \\
        15552& 928.958   & 167.627 & 6.516                 & 1.774       & 24.006                     & 2.23E-01               & 3.84E-11                        \\
        18144& 1297.067  & 252.541 & 8.561                 & 2.435       & 32.995                     & 2.32E-01               & 4.61E-08       \\                
        \bottomrule
    \end{tabular}
\begin{tablenotes}
    \item[1] Computation Time: Details the duration required for different computational tasks.\\
        \item[1.a] Solving Time: Duration for a single iteration of solving for the model's fixed point for a candidate parameter.\\
        \item[1.b] Hotz-Miller Inversion Time: Time required for inverting the matrix \(I - \beta \mathbf{F}^{\mathbf{P}}\), a critical step for the \cite{Hotz1993} estimation procedure.\\
        \item[1.c] Computation for \(\mathbf{w}\): Time taken to compute the weight vectors as described in Section \ref{sec:computation_detail}.\\
        \item[1.d] FD1/FD2 Time: Time required to execute the FD1/FD2 estimator after obtaining the weight vector.\\
    \item[2] Norms of \(\tilde{\mathbf F}^{(1)}\) and \(\tilde{\mathbf F}^{(2)}\): These are the largest singular values from the bias correction mappings. They indicate the influence of future value function on the current one. The near-zero value of \(\tilde{\mathbf F}^{(2)}\) under all conditions suggests that it is possible to identify weight vectors that ensure models exhibit the finite dependence property.
\end{tablenotes}
\end{threeparttable}
    
\end{table}

\begin{figure}[htbp]
    \centering
    \caption{Comparative Computation Times for FD Weight Solving and HM Inversion as a Function of State Space Size}
    \includegraphics[width=0.6\linewidth]{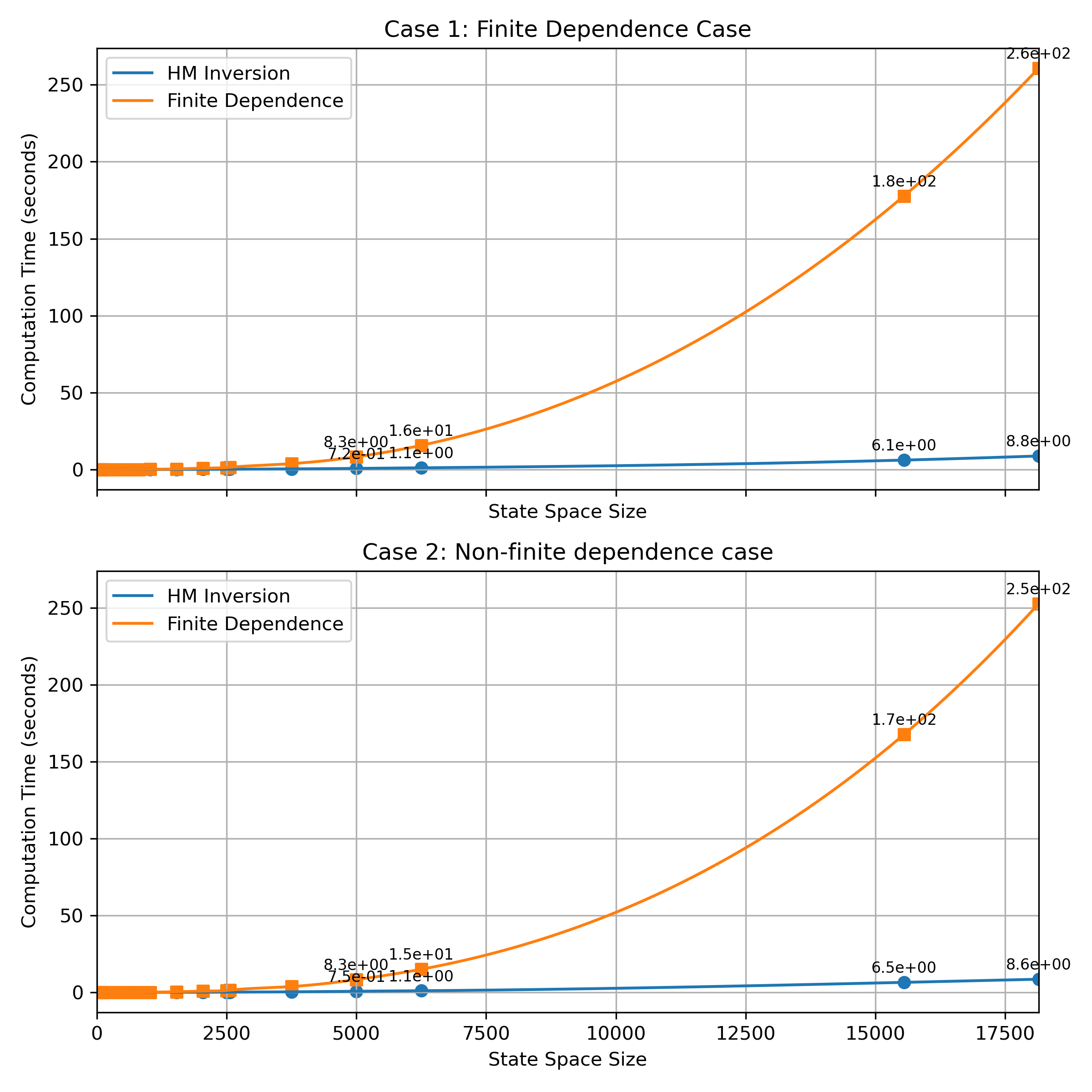}
    \label{fig:computation_time_state_space}
\begin{tablenotes}
    \footnotesize
    Note: This figure contrasts the computation time required for Fixed-Decision (FD) weight solving with that of Hotz-Miller (HM) inversion, relative to the size of the state space. The cubic increase in computation time for HM inversion with state space size suggests a higher computational complexity. Conversely, FD weight solving demonstrates a sub-cubic rate of increase, implying enhanced scalability for larger state spaces. Such insights are pivotal for evaluating the computational trade-offs between initial setup times and long-term scalability when applying the FD and HM methods to extensive models.
\end{tablenotes}
\end{figure}
\subsection{Stationary 2-step estimator}

\begin{table}[htbp]
	\centering
	\scriptsize
	\caption{Mean Estimates from Two-Step Estimation: Sample with $N = 30$, $T = 40$}
	\begin{threeparttable}
		\begin{tabular}{lccccccc cc}
			\toprule
			Estimator & $\theta^\mathrm{\mathrm{\mathrm{VP}}0}_m$  & $\theta^\mathrm{\mathrm{\mathrm{VP}}1}_m$ & $\theta^\mathrm{\mathrm{\mathrm{VP}}2}_m$& $\theta^\mathrm{\mathrm{FC}0}_m$& $\theta^\mathrm{\mathrm{FC}1}_m$& $\theta^\mathrm{\mathrm{EC}0}_m$  & $\theta^\mathrm{\mathrm{EC}1}_m$ & Time      & Time Inv    \\
			\midrule
         &    \multicolumn{9}{c}{$X = 2500$, $\gamma_a = 0$}       \\
         \cmidrule{2-10} 
HM        & 0.51   & 1.021  & -1.018 & 0.489  & 1.012  & 0.988  & 1.013  & 1.2115 & 1.1413 \\
          & (0.149) & (0.091) & (0.091) & (0.237) & (0.097) & (0.223) & (0.137) &        &        \\
FD   & 0.508  & 1.017  & -1.013 & 0.502  & 1.007  & 0.968  & 1.004  & 0.2002 & 0.1494 \\
          & (0.151) & (0.079) & (0.077) & (0.275) & (0.09)  & (0.238) & (0.094) &        &        \\
FD2  & 0.508  & 1.012  & -1.009 & 0.515  & 1.006  & 0.961  & 1.002  & 0.6375 & 0.0918 \\
          & (0.17)  & (0.046) & (0.047) & (0.306) & (0.06)  & (0.242) & (0.072) &        &        \\
         \cmidrule{2-10} 
 & \multicolumn{9}{c}{$X = 2500$, $\gamma_a = 0.5$}                 \\ \cmidrule{2-10} 
HM        & 0.514  & 1.019  & -1.017 & 0.518  & 1.015  & 0.961  & 0.997  & 1.1918 & 1.1178 \\
          & (0.149) & (0.09)  & (0.094) & (0.319) & (0.099) & (0.282) & (0.143) &        &        \\
FD  & 0.506  & 1.044  & -1.041 & 0.386  & 1.014  & 0.963  & 1.006  & 0.0957 & 0.0454 \\
          & (0.146) & (0.097) & (0.1)   & (0.391) & (0.097) & (0.289) & (0.107) &        &        \\
FD2 & 0.52   & 1.03   & -1.025 & 0.527  & 1.019  & 0.956  & 1.001  & 0.6175 & 0.0903 \\
          & (0.149) & (0.122) & (0.126) & (0.395) & (0.125) & (0.288) & (0.12)  &        &        \\
         \cmidrule{2-10} 
 & \multicolumn{9}{c}{$X = 12960$, $\gamma_a = 0$}        \\ \cmidrule{2-10} 
HM   & 0.509  & 1.018  & -1.02  & 0.515  & 1.012  & 1.013  & 1.015  & 94.4154 & 94.3492 \\
     & (0.167) & (0.086) & (0.082) & (0.286) & (0.081) & (0.225) & (0.118) &         &         \\
FD  & 0.509  & 1.018  & -1.021 & 0.515  & 1.01   & 1.004  & 1.011  & 1.2910  & 1.2427  \\
     & (0.168) & (0.082) & (0.078) & (0.303) & (0.081) & (0.216) & (0.099) &         &         \\
FD2 & 0.511  & 1.023  & -1.031 & 0.514  & 1.016  & 1.008  & 1.016  & 15.8718 & 2.4842  \\
     & (0.156) & (0.092) & (0.093) & (0.234) & (0.081) & (0.22)  & (0.098) &         &         \\
         \cmidrule{2-10} 
 & \multicolumn{9}{c}{$X = 12960$, $\gamma_a = 0.5$}  \\ \cmidrule{2-10} 
HM   & 0.502  & 1.021  & -1.016 & 0.474  & 1.006  & 1.018  & 1.02   & 94.3912 & 94.3184 \\
     & (0.164) & (0.093) & (0.091) & (0.364) & (0.093) & (0.287) & (0.137) &         &         \\
FD  & 0.484  & 1.022  & -1.018 & 0.31   & 1.01   & 1.009  & 1.027  & 1.3737  & 1.3106  \\
     & (0.209) & (0.046) & (0.043) & (0.965) & (0.093) & (0.274) & (0.109) &         &         \\
FD2 & 0.501  & 1.028  & -1.019 & 0.446  & 1.005  & 1.005  & 1.013  & 15.7964 & 2.4861  \\
     & (0.169) & (0.115) & (0.113) & (0.466) & (0.12)  & (0.275) & (0.112) &         &         \\
			\bottomrule
		\end{tabular}
\begin{tablenotes}
  \item[1] Data generation parameter set $\theta$: $\theta^\mathrm{VP}_0=0.5$, $\theta^\mathrm{VP}_1=1.0$, $\theta^\mathrm{VP}_2=-1.0$, $\theta^\mathrm{FC}_0=0.5$, $\theta^\mathrm{FC}_1=1.0$, $\theta^\mathrm{EC}_0=1.0$, $\theta^\mathrm{EC}_1=1.0$.\\
  \item[2] Table entries represent the mean of 100 Monte Carlo simulation estimates (first row) and the corresponding standardized mean squared errors (second row) for each estimator. In the 1-period finite dependence scenario ($\gamma_a = 0$), all estimators are consistent. In contrast, for non-1-period finite dependence ($\gamma_a = 0.5$), only the HM and FD2 estimators maintain consistency.\\
  \item[3] "Time" indicates the computation time required for each estimator. "Time Inv" refers to the time spent on matrix inversion for the HM estimator. For FD estimators, "Time Inv" represents the time taken to compute the optimal weights. The FD and FD2 estimators outperform the HM in terms of speed, particularly when dealing with larger state spaces.
\end{tablenotes}	
\end{threeparttable}\label{tab:two_step_large}
\end{table}

The table presents the performance of two-step estimators under a stationary model with different configurations, specifically focusing on the mean estimates for various parameters across different methods. The simulations are based on a sample size of $N = 30$ and $T = 40$, with two distinct values for the exogenous variable $X$ and the autocorrelation parameter $\gamma_a$. The estimated parameters are denoted by $\theta^\mathrm{VP}_m$ for variance parameters, $\theta^\mathrm{FC}_m$ for first coefficient parameters, and $\theta^\mathrm{EC}_m$ for the second coefficient parameters. The computational time and time involved in inversion (if applicable) are also reported.

Under the scenario where $X = 2500$ and $\gamma_a = 0$, it is observed that the estimators HM, FD, and FD2 provide mean estimates close to the true parameters values of $\theta_0^{\mathrm{VP}}=0.5$, $\theta_1^{\mathrm{VP}}=1$, and $\theta_2^{\mathrm{VP}}=-1$. The standard errors are relatively small, indicating precision in the estimates. The time taken for computations is significantly less for the FD and FD2 estimators compared to the HM estimator, with FD being the fastest. When $\gamma_a$ is increased to $0.5$, the mean estimates of the parameters remain relatively stable, with some increase in standard errors, suggesting that the presence of autocorrelation affects the precision of the estimates. In this case, FD remains the fastest, suggesting efficiency in computational time.

When the value of $X$ is increased to $12960$ while keeping $\gamma_a = 0$, the mean estimates remain consistent with the true parameter values, and the standard errors do not show a significant increase, which indicates that the increase in the exogenous variable does not substantially affect the estimator's performance. However, the computational time for the HM estimator increases drastically, highlighting its computational inefficiency for larger datasets. In the FD and FD2 estimators, the time increases as well, but not as dramatically, showing their relative computational efficiency.

With both $X = 12960$ and $\gamma_a = 0.5$, the mean estimates are still consistent with the true parameters, although the standard errors are somewhat larger compared to the case with no autocorrelation, which is expected as autocorrelation generally makes estimations more challenging. The computational time remains high for the HM estimator, while FD and FD2 show better computational efficiency. 

Overall, the results indicate that the FD and FD2 are more computationally efficient than the HM estimator, especially as the size of the exogenous variable increases. The estimators' performance is relatively robust to the presence of autocorrelation, although it does increase the standard errors slightly. The results highlight the importance of considering both estimation accuracy and computational efficiency when choosing an estimator for practical applications.

\subsection{Non-stationary 2-step estimator}

The table reports the estimation of parameters for non-stationary models. Traditional methods like the Rust successive approximation or Hotz and Miller's approach fall short, mainly due to the challenge of determining the future value \( V_{Tdata} \) at the terminal period. 

The innovative FD2 method estimates the weighted transition probabilities with 2 period weights \(\mathbf{w}_{t+1}(x_{t+1} | x_t)\) and \(\mathbf{w}_{t+2}(x_{t+2} | x_t)\), and it performs remarkably well in minimizing bias and mean squared error (MSE), as seen from the values in brackets. 
Overall, the FD2 method outperforms all other methods, with  smaller bias and lower MSE.

It is worth noting that if we allow the weight to be specific to the value of $x_t$, as suggested by the weight search algorithm proposed by Arcidiacono and Miller (2019), we can keep the advantage of the Kronecker product structure of the transition densities. We refer to the weight-solving proposed by Arcidiacono and Miller (2019) as FD without $x_t$-dependent weights. However, using this method can significantly increase the time required for solving weights in high-dimensional models.
Furthermore, if we do not allow the weight to be based on $x_t$, our approach delivers inferior results compared to the FD method in terms of attaining a lower \(\rho\) for \(\mathbf{w}_{t+1}\). This implies that the FD method is more effective in discovering the optimal weights and providing a more precise estimation of the model parameters.

In the table, we can observe the effectiveness of the FD2 method in estimating the non-stationary model parameters compared to the traditional FD method and the FD method without $x_t$-dependent weights. The table presents the estimated values for the parameters VP0, VP1, VP2, FC0, FC1, EC0, and EC1, as well as the time and values for $\rho_1$ and $\rho_2$. The numbers in brackets denote the respective bias and mean squared error (MSE).

When analyzing the table, it is evident that the FD2 method has a smaller bias and lower MSE compared to both the FD method and the FD method without $x_t$-dependent weights. This highlights the superiority of the FD2 method in estimating nonstationary model parameters more accurately. Additionally, the time required to solve weights in high-dimensional models can be significantly reduced by using the weight search algorithm proposed by Arcidiacono and Miller (2019).
In conclusion, the table demonstrates the performance of the FD2 method in estimating the parameters of nonstationary models, providing more accurate and precise results than the FD and FD without $x_t$-dependent weights methods.

The FD method without the $x_t$ dependence requires a brute-force approach to determine the weights, resulting in a time-intensive process. However, incorporating $x_t$-dependent weights allows for the utilization of Kronecker products, which streamlines the computation of weights and significantly enhances efficiency. This refinement in weight characterization results in a more practical and expedient estimation process for complex models.

\begin{table}[htbp]
\centering
\caption{Estimator Performance in a Non-Stationary Environment with Time Horizon \(T=4\)}
\begin{threeparttable}
\scriptsize
\begin{tabular}{lcccccccccc}
\toprule
{} & VP0 & VP1 & VP2 & FC0 & FC1 & EC0 & EC1 & Time (s) & \(\tilde{\mathbf F}^{(2)}_1\) & \(\tilde{\mathbf F}^{(2)}_2\) \\
\midrule
True $\theta$ & 0.5 & 1.0 & -1.0 & 0.5 & 1.0 & 1.0 & 1.0\\
\midrule
\textbf{} & \multicolumn{10}{c}{\textbf{X=1024, $\gamma_a$ = 0}} \\
\midrule
\textbf{FD} & 0.498 & 1.010 & -1.008 & 0.526 & 1.022 & 0.985 & 1.008 & 0.12 & 5.18E-17 & 2.63E-17 \\
\textbf{} & (0.120)& (0.094)& (0.091)& (0.194)& (0.071)& (0.219)& (0.090)& 0.07 & & \\
\textbf{FD2} & 0.502 & 1.004 & -1.002 & 0.511 & 1.000 & 0.990 & 1.000 & 0.37 & 9.38E-15 & 2.62E-17 \\
\textbf{} & (0.056)& (0.027)& (0.022)& (0.137)& (0.001)& (0.215)& (0.001)& 0.15 & & \\
\textbf{FD} & 0.495 & 1.016 & -1.015 & 0.491 & 1.002 & 1.002 & 0.996 & 1046.91 & 2.37E-16 & 4.31E-16 \\
\textbf{(no $x_t$-dependent)} & (0.118)& (0.096)& (0.093)& (0.190)& (0.067)& (0.216)& (0.090)& 1046.86 & & \\ \midrule
\textbf{} & \multicolumn{10}{c}{\textbf{X=1024, $\gamma_a$ = 0.5}} \\ \midrule
\textbf{FD} & 0.477 & 1.103 & -1.104 & 0.111 & 0.984 & 1.023 & 1.003 & 0.11 & 1.01E-01 & 1.58E-01 \\
\textbf{} & (0.132)& (0.135)& (0.134)& (0.445)& (0.081)& (0.227)& (0.086)& 0.06 & & \\
\textbf{FD2} & 0.497 & 1.011 & -1.011 & 0.467 & 1.010 & 1.029 & 1.002 & 0.21 & 3.56E-17 & 6.83E-17 \\
\textbf{} & (0.115)& (0.091)& (0.092)& (0.254)& (0.099)& (0.234)& (0.099)& 0.16 & & \\
\textbf{FD} & 0.432 & 1.117 & -1.118 & 0.067 & 0.953 & 0.905 & 1.028 & 767.69 & 1.13E-01 & 3.11E-01 \\
\textbf{(no $x_t$-dependent)} & (0.143)& (0.156)& (0.153)& (0.468)& (0.094)& (0.216)& (0.125)& 767.64 & & \\

\midrule
\textbf{} & \multicolumn{10}{c}{\textbf{X=2500, $\gamma_a$ = 0}} \\ \midrule
\textbf{FD} & 0.505 & 1.017 & -1.017 & 0.515 & 1.027 & 0.976 & 0.998 & 0.44 & 5.24E-17 & 2.66E-17 \\
\textbf{} & (0.111)& (0.083)& (0.082)& (0.172)& (0.088)& (0.249)& (0.095)& 0.40 & & \\
\textbf{FD2} & 0.498 & 1.000 & -1.000 & 0.502 & 1.000 & 0.991 & 1.000 & 1.11 & 9.53E-15 & 2.66E-17 \\
\textbf{} & (0.047)& (0.000)& (0.000)& (0.140)& (0.001)& (0.245)& (0.001)& 0.91 & & \\
\textbf{FD} & 0.502 & 1.021 & -1.021 & 0.479 & 1.007 & 0.997 & 0.986 & 23634.46& 1.36E-16 & 1.87E-16 \\
\textbf{(no $x_t$-dependent)} & (0.111)& (0.084)& (0.082)& (0.169)& (0.084)& (0.247)& (0.095)& 23634.41& & \\ \midrule
\textbf{} & \multicolumn{10}{c}{\textbf{X=2500, $\gamma_a$ = 0.5}} \\ \midrule
\textbf{FD} & 0.476 & 1.115 & -1.114 & 0.103 & 1.000 & 1.008 & 1.000 & 0.41 & 1.03E-01 & 1.60E-01 \\
\textbf{} & (0.133)& (0.155)& (0.154)& (0.450)& (0.099)& (0.258)& (0.096)& 0.36 & & \\
\textbf{FD2} & 0.489 & 1.009 & -1.005 & 0.485 & 1.021 & 1.005 & 0.999 & 0.95 & 3.62E-17 & 6.94E-17 \\
\textbf{} & (0.119)& (0.092)& (0.082)& (0.232)& (0.125)& (0.270)& (0.110)& 0.91 & & \\
\textbf{FD} & 0.424 & 1.124 & -1.123 & -0.026 & 0.952 & 1.059 & 1.002 & 6725.81 & 1.66E-01 & 2.33E-01 \\
\textbf{(no $x_t$-dependent)} & (0.155)& (0.172)& (0.170)& (0.561)& (0.108)& (0.261)& (0.098)& 6725.76 & & \\
\bottomrule
\end{tabular}
\begin{tablenotes}
  \item \textbf{Note:} This table compares estimator performance in non-stationary conditions over a time horizon of $T=4$. True parameter values ($\theta$) are included for reference. \\
  \item[1] The evaluated estimators, FD and FD2, are analyzed both with and without weights depends on $x_t$.\\
  \item[2] The coefficients VP0, VP1, VP2, FC0, FC1, EC0, and EC1 represent different estimator parameters. The true values are provided in the first row, and the second row presents mean estimates derived from 100 Monte Carlo simulations. Values in parentheses indicate standardized mean squared errors.\\
  \item[3] "Time" denotes the computational duration in seconds, where lower values suggest more efficient performance. For FD without $x_t$ dependence, computation times are significantly higher for moderate state space sizes (e.g., $X=2500$). Conversely, FD and FD2 estimators show much lower time costs.\\
  \item[4] $\tilde{\mathbf F}^{(2)}_1$ and $\tilde{\mathbf F}^{(2)}_2$ quantify the bias correction in the estimators, with values nearing zero signifying greater precision. In scenarios where $\gamma_a = 0.5$, these metrics indicate a noticeable bias in estimator performance, particularly for the FC0 coefficient.\\
  \item[5] The $\gamma_a$ parameter values of 0 and 0.5 delineate finite-dependence: 0 implies a 1-period finite dependence, whereas 0.5 suggests a departure from this pattern. For 1-period finite dependence models, all estimators are consistent. For non-1-period finite dependence models, only FD2 is consistent, with the FD biased.\\
  \item[6] The variable X represents the sample size used in Monte Carlo simulations, with larger values expected to produce more precise estimators.
\end{tablenotes}
\end{threeparttable}
\end{table}

\section{Extension to dynamic game}\label{sec:6_game}
In the context of game settings, individual players can exhibit finite dependence. This means that for each player, the state variables' transition matrices are influenced by the player's arbitrary weighting of their potential future decisions (as long as the weights sum up to one within a given period), while the other players adhere to their equilibrium strategies. Determining finite dependence in games cannot solely rely on the transition primitives, as in individual optimization. Whether finite dependence exists can also depend on which equilibrium is played. This is not a paradoxical outcome, as different equilibria in the same game can reveal varying information about the primitives, thus naturally requiring different estimation approaches.

In the games setting, assume there are $i$ players making choices in period $t = 1, \ldots, T.$ The systematic part of payoffs to the $i$-th player not only depends on his own choice in period $t$, $d_{it}$, the state variable $x_t$, but also the choice of the other players, denoted using $d_{t}^{(-i)} = (d_{t}^{(1)}, \ldots, d_{t}^{(i-1)}, d_{t}^{(i+1)}, \ldots, d_{t}^{(N)})$.
The flow utility of player $i$ is denoted by $U_{t}^{(i)}(x_t, d, d_{t}^{(-i)}) + \epsilon_{idt},$ where $\epsilon_{dt}^{(i)}$ denote the identically and independently distributed random variable that is private information to player $i$. 
Although the players all face the same observed state variables, these state variables typically affect players differently. 
For example, adding to the $i$-th player's capital may increase his payoffs and reduce the payoffs to the others. 
For this reason, the payoff function is indexed by $it.$

The players make simultaneous decisions in each period. We let $p_t(d_{t}^{(-i)}| x_t)$ denote the joint conditional choice probability that the players aside from $i$ collectively choose $d_{t}^{(-i)}$ at time $t$ given the state variable $x_t$. 
Since $\epsilon^{(i)}_t$ is independently distributed across all the players, $p_t(d_{t}^{(-i)}| x_t)$ has a product representation: 
\begin{equation}
p_t(d_{t}^{(-i)}| x_t) = \Pi_{\tilde i \neq i} \Big( p_t(d_{t}^{(\tilde i)}| x_t) \Big), \label{eq:game_belief}    
\end{equation}

We assume each player acts like a Bayesian when forming his beliefs about the choices of the other players and that a Markov-perfect equilibrium is played. Hence, the players' beliefs match the probabilities in equation \eqref{eq:game_belief}.
Taking the expectation of  $U_j^t(n)(x_t, d_t^{(-n)})$ over $d_t^{(-n)}$,  we define the systematic component of the current utility of player $i$ as a function of the state variables as
\begin{equation}
u^t(n)(x_t) = \mathbb{E}_{d_t^{(-n)} \in D^{N-1}} \left[P_t(d_t^{(-n)}\mid x_t) U_j^t(n)\mid x_t, d_t^{(-n)}\right]    
\end{equation}

The values of the state variables at period t+1 are determined by the period t choices by all the players as well as the values of the period t state variables. We consider a model in which the state variables can be partitioned into those that are affected by only one of the players and those that are exogenous. For example, to explain the number and size of firms in an industry, the state variables for the model might be indicators of whether each potential firm is active or not and a scalar to measure firm capital or capacity; each firm controls its own state variables, through their entry and exit choices, as well as their investment decisions.
The partition can be expressed as $x_t \equiv (z_t, \omega^{(1)}_t, ..., \omega^{(N)}_t)$, where $z_t$ denotes the states that are exogenously determined by transition probability $f_{z,t}(z_{t+1}|z_{t})$, and $\omega^{(i)}_t \in X^{(i)} \equiv {1, ..., X^{(i)}}$ is the component of the state-controlled or influenced by player $i$. Let $f^{(i)}_{\omega, t}(\omega^{(i)}_{t+1}|\omega^{(i)}_{t}, d)$ denote the probability that $\omega^{(i)}_{t+1}$ occurs at time $t+1$ when player $i$ chooses $d$ at time $t$ given $\omega^{(i)}_t$. Many models in industrial organizations exploit this specialized structure because it provides a flexible way for players to interact while keeping the model simple enough to be empirically tractable. 
% Since the transitions of the exogenous variables do not substantively affect our analysis, we ignore them for the rest of the paper to conserve on notation.

Denote the state variables associated with all the players aside from $i$ as 
\[\begin{split}
\omega^{(-i)}_t \equiv \omega^{(1)}_t \times \ldots \times \omega^{(i-1)}_t \times \omega^{(i+1)}_t \times \ldots \times \omega^{(N)}_t  \\ \in  \mathcal{W}^{(- n)} \equiv \mathcal{W}^{(1)} \times \ldots \times \mathcal{W}^{(i-1)} \times \mathcal{W}^{(i+1)} \times \ldots \times \mathcal{W}^{(N)}.
\end{split}\] 
Under this specification, the reduced form transition generated by their equilibrium choice probabilities is defined as:
\[
f^{(- i)}_{\omega,t} \big( \omega^{(- i)}_{t+1} | \underbrace{ z_{t},\omega^{(i)}_{t}, \omega^{(- i)}_{t}}_{x_t} \big ) \equiv \prod_{\tilde i \neq i} \prod_{k=0}^D p_t(d_{t}^{(\tilde i)}| x_t) f^{(\tilde i)}_{\omega, t} \left( \omega^{(\tilde i)}_{t+1}|\omega^{(\tilde i)}_{t}, k \right)
\]

In the context of our game-theoretic model, we can delineate the state transition dynamics for each player with greater clarity. For player \(i\), the transition matrix is denoted by \(f^{(i)}( x_{t+1} | x_{t}, d)\). This matrix encapsulates the probability that player \(i\) will transition from their current state \(x_{t}\) to a subsequent state \(x_{t+1}\), contingent upon the decision \(d\). The composition of this transition matrix is tripartite, consisting of the following elements:

\begin{itemize}
  \item The opponent's strategy, represented by \(f^{(-i)}_{\omega,t} \big( \omega^{(- i)}_{t+1} | z_{t}, \omega^{(i)}_{t}, \omega^{(-i)}_{t} \big)\), which describes the likelihood of the opponent transitioning to a state \(\omega^{(- i)}_{t+1}\) based on the current joint state of the game.
  \item The stochastic process \(f_{z,t}(z_{t+1}|z_t)\) dictating the evolution of the game's environment from state \(z_t\) to state \(z_{t+1}\).
  \item The player's own strategy dynamics, given by \(f_{\omega,t}^{(i)}( \omega^{(i)}_{t+1} | x_t, d)\), which outlines how player \(i\)'s own state is expected to progress.
\end{itemize}

Upon determining these components, we aggregate the elements to construct the matrix \(\mathbf{F}_d^{(i)}\) for each possible decision \(d\), where \(d\) ranges from 0 to \(D\). Subsequently, for each player, we apply the methodology delineated in the preceding section to compute the weight matrix.

%%%%%%%%%%%%%%%%%%%%%%%%%%%%%%%%%%%%%%%%%%%%%%
%% Single Appendix:%%
%%%%%%%%%%%%%%%%%%%%%%%%%%%%%%%%%%%%%%%%%%%%%%
\begin{appendix}
\section{Proof}
  
\begin{proof}[Proof of Proposition \ref{prop:HM}]
    See proof of \citet[][Proposition 1]{Hotz1993}.
\end{proof}

\begin{proof}[Proof of Proposition \ref{prop:finite_dependence_two_period}]
  We can write the product $\mathbf{F}_0  \mathcal{P}_{\mathbf{\tilde F}}$ as:
  \[ 
  \mathbf{F}_0  \mathcal{P}_{\mathbf{\tilde F}} = \mathbf{V}_{\mathbf{\tilde F}} \mathbf{S}_0 \mathbf{V}_{\mathbf{\tilde F}}^\top  \mathbf{V}_{\mathbf{\tilde F}} \Sigma_{\text{Null}(\tilde F)} \mathbf{V}_{\mathbf{\tilde F}}^\top =  \mathbf{V}_{\mathbf{\tilde F}} \begin{bmatrix}
       \mathbf{0} &  \mathbf{S}_{0,01} \\
       \mathbf{0} &  \mathbf{S}_{0,11} \\
  \end{bmatrix} \mathbf{V}_{\mathbf{\tilde F}}^\top.
  \]
  
  For $\rho=2$, let's define $\mathbf{E} = \mathbf{\tilde F} (\mathbf{F}_0  \mathcal{P}_{\mathbf{\tilde F}})^{\rho-1} = \mathbf{U}_{\mathbf{\tilde F}} \mathbf{S}_{\mathbf{E}} \mathbf{V}_{\mathbf{\tilde F}}^\top$, with $\mathbf{S}_{\mathbf{E}} = \begin{bmatrix}
       \mathbf{0} &  \mathbf{S}_{\tilde F,00} \mathbf{S}_{0,01} \\
       \mathbf{0} &  \mathbf{0}  \\
  \end{bmatrix}$. 
  
  Since $\mathbf{S}_{\tilde F,00}$ has full rank, we have $\mathbf{I} - \mathbf{E}^+ \mathbf{E} =  \mathbf{V}_{\mathbf{\tilde F}} ( \mathbf{I} -  \begin{bmatrix}
       \mathbf{0} &  \mathbf{0}  \\
       \mathbf{0} &   \mathbf{S}_{0,01}^+ \mathbf{S}_{0,01}  \\
  \end{bmatrix} )   \mathbf{V}_{\mathbf{\tilde F}}^\top$.
  
  Now, we can express $\Delta \mathbf{F}^{(2)} = \mathbf{\tilde F} (\mathbf{F}_0 \mathbf{F}_0  \mathcal{P}_{\mathbf{\tilde F}}) \Big(\mathbf{I} - \mathbf{E}^+ \mathbf{E} \Big)$ as follows:
      \[\begin{split}
       \Delta \mathbf{F}^{(2)} & = \Big( \mathbf{U}_{\tilde F} \mathbf{S}_{\tilde F}  \mathbf{S}_{0} \mathbf{S}_{0} \Sigma_{\text{Null}(\tilde F)} \mathbf{V}_{\mathbf{\tilde F}}^\top \Big)  \Big( \mathbf{V}_{\mathbf{\tilde F}} ( \mathbf{I} -  \begin{bmatrix}
       \mathbf{0} &  \mathbf{0}  \\
       \mathbf{0} &   \mathbf{S}_{0,01}^+ \mathbf{S}_{0,01}  \\
      \end{bmatrix} )   \mathbf{V}_{\mathbf{\tilde F}}^\top \Big)\\
       & = \mathbf{U}_{\tilde F} \begin{bmatrix}
       \mathbf{S}_{\tilde F,00} \mathbf{S}_{0,00} &  \mathbf{S}_{\tilde F,00} \mathbf{S}_{0,01} \\
       \mathbf{0} &  \mathbf{0}  \\
      \end{bmatrix} \begin{bmatrix}
      \mathbf{0} & \mathbf{S}_{0,01} \\
      \mathbf{0} & \mathbf{S}_{0,11} \\
      \end{bmatrix} \begin{bmatrix}
       \mathbf{I} &  \mathbf{0}  \\
       \mathbf{0} &   \mathbf{I} - \mathbf{S}_{0,01}^+ \mathbf{S}_{0,01}  \\
      \end{bmatrix} \mathbf{V}_{\mathbf{\tilde F}}^\top
       \\
       &= \mathbf{U}_{\tilde F}  \begin{bmatrix}
      \mathbf{0} &  \mathbf{S}_{\tilde F,00} \Big( \mathbf{S}_{0,01} \mathbf{S}_{0,11} (\mathbf{I} - \mathbf{S}_{0,01}^+ \mathbf{S}_{0,01} ) \Big) \\
      \mathbf{0} &  \mathbf{0}  \\
      \end{bmatrix} \mathbf{V}_{\mathbf{\tilde F}}^\top. 
      \end{split} \]
     Since $\mathbf{S}_{\tilde F,00}$ has full rank, $\Delta \mathbf{F}^{(2)} = 0$ if $\mathbf{S}_{0,01} \mathbf{S}_{0,11} (\mathbf{I} - \mathbf{S}_{0,01}^+ \mathbf{S}_{0,01}) = 0$.
     \end{proof}
% \section{Constrained Maximization}
% \input{sections/B2_lagrangian}

\end{appendix}
\bibliographystyle{ecta-fullname} % Style BST file
\bibliography{reference2}  % Bibliography file (usually '*.bib')

\end{document}

% --- supplement: Appendix.tex ---

\title{{\Large Appendix for "Estimate Non-Finite-Dependent Dynamic Discrete Choice Model with Unobserved Heterogeneity"}}
\author{Yu (Jasmine) Hao \footnote{\href{mailto: haoyu@hku.hk} {haoyu@hku.hk} HKU Business School, Hong Kong University.} \\ \emph{HKU Business School}
  \and Hiro Kasahara\footnote{ \href{mailto: jale@uandes.cl}{hkasahar@mail.ubc.ca} Vancouver School of Economics.}
   \\ \emph{University of British Columbia.}
  }
\date{\today}
\maketitle

\section{Additional Simulation Results}
\subsection{Two-step Estimator}
\begin{table}[H]
\centering
\scriptsize
\caption{Mean Estimates of Two-step estimator, $X=6250$, $N = 200, T = 50$}
\begin{threeparttable}
\begin{tabular}{lccccccc cc}
\toprule
 & $\theta^\mathrm{VP0}_m$  & $\theta^\mathrm{VP1}_m$ & $\theta^\mathrm{VP2}_m$& $\theta^\mathrm{FC0}_m$& $\theta^\mathrm{FC1}_m$& $\theta^\mathrm{EC0}_m$  & $\theta^\mathrm{EC1}_m$ & Time  &  $\rho$   \\
 \midrule
 \multicolumn{10}{c}{$\gamma_a = 0$(finite dependent model)}\\
 \midrule
$\bs{\theta}^{\mathrm{HM}}$                      & 0.499   & 1.001   & -0.999  & 0.500   & 0.998   & 1.003   & 0.994   & 15.740 & 0.000e+0  \\
& (0.027) & (0.038) & (0.032) & (0.047) & (0.053) & (0.055) & (0.084) & 15.350 &           \\
$\bs{\theta}^{\mathrm{AFD}}_{1,0}$               & 0.499   & 1.001   & -0.999  & 0.501   & 0.998   & 1.004   & 0.991   & 1.679  & 6.317e-6  \\
& (0.030) & (0.039) & (0.033) & (0.058) & (0.056) & (0.057) & (0.096) & 1.240  &           \\
$\bs{\theta}^{\mathrm{AFD}}_{1,q^{\mathrm{BE}}}$ & 0.499   & 1.001   & -0.999  & 0.501   & 0.998   & 1.004   & 0.991   & 32.484 & 6.317e-6  \\
& (0.030) & (0.039) & (0.033) & (0.058) & (0.056) & (0.057) & (0.096) & 1.240  &           \\
$\bs{\theta}^{\mathrm{AFD}}_{2,0}$               & 0.499   & 1.001   & -0.999  & 0.501   & 0.998   & 1.004   & 0.991   & 3.030  & 1.037e-11 \\
& (0.030) & (0.039) & (0.033) & (0.058) & (0.056) & (0.057) & (0.096) & 2.470  &           \\
$\bs{\theta}^{\mathrm{AFD}}_{3,0}$               & 0.499   & 1.001   & -0.999  & 0.501   & 0.998   & 1.004   & 0.991   & 4.374  & 1.720e-17 \\
& (0.030) & (0.039) & (0.033) & (0.058) & (0.056) & (0.057) & (0.096) & 3.679  &           \\
$\bs{\theta}^{\mathrm{AFD}}_{4,0}$               & 0.499   & 1.001   & -0.999  & 0.501   & 0.998   & 1.004   & 0.991   & 5.725  & 6.938e-19 \\
& (0.030) & (0.039) & (0.033) & (0.058) & (0.056) & (0.057) & (0.096) & 4.905  &           \\
$\bs{\theta}^{\mathrm{AFD}}_{5,0}$               & 0.499   & 1.001   & -0.999  & 0.501   & 0.998   & 1.004   & 0.991   & 7.062  & 6.848e-19 \\
& (0.030) & (0.039) & (0.033) & (0.058) & (0.056) & (0.057) & (0.096) & 6.109  &          \\     
\midrule
\multicolumn{10}{c}{$\gamma_a = 1$(non-finite dependent model)}\\
 \midrule
$\bs{\theta}^{\mathrm{HM}}$                      & 0.509   & 0.998   & -0.999  & 0.523   & 0.996   & 0.987   & 1.008   & 15.918 & 0.000 \\
& (0.041) & (0.030) & (0.030) & (0.103) & (0.055) & (0.075) & (0.095) & 15.647 &       \\
$\bs{\theta}^{\mathrm{AFD}}_{1,0}$               & 0.416   & 0.999   & -0.999  & 0.235   & 0.998   & 0.988   & 1.008   & 1.508  & 1.024 \\
& (0.094) & (0.031) & (0.032) & (0.286) & (0.059) & (0.077) & (0.106) & 1.065  &       \\
$\bs{\theta}^{\mathrm{AFD}}_{1,q^{\mathrm{BE}}}$ & 0.456   & 0.998   & -0.999  & 0.347   & 0.997   & 0.987   & 1.008   & 1.633  & 1.024 \\
& (0.063) & (0.031) & (0.032) & (0.191) & (0.059) & (0.077) & (0.106) & 1.065  &       \\
$\bs{\theta}^{\mathrm{AFD}}_{2,0}$               & 0.478   & 0.997   & -0.999  & 0.412   & 0.997   & 0.987   & 1.008   & 2.716  & 0.302 \\
& (0.053) & (0.031) & (0.032) & (0.153) & (0.059) & (0.077) & (0.106) & 2.149  &       \\
$\bs{\theta}^{\mathrm{AFD}}_{3,0}$               & 0.491   & 0.997   & -0.999  & 0.465   & 0.997   & 0.987   & 1.008   & 3.931  & 0.075 \\
& (0.050) & (0.031) & (0.032) & (0.135) & (0.059) & (0.077) & (0.106) & 3.232  &       \\
$\bs{\theta}^{\mathrm{AFD}}_{4,0}$               & 0.494   & 0.997   & -0.999  & 0.476   & 0.997   & 0.987   & 1.008   & 5.149  & 0.019 \\
& (0.050) & (0.031) & (0.032) & (0.134) & (0.059) & (0.077) & (0.106) & 4.313  &       \\
$\bs{\theta}^{\mathrm{AFD}}_{5,0}$               & 0.495   & 0.997   & -0.999  & 0.478   & 0.997   & 0.987   & 1.008   & 6.354  & 0.005 \\
& (0.050) & (0.031) & (0.032) & (0.134) & (0.059) & (0.077) & (0.106) & 5.384  &       \\
 \bottomrule
\end{tabular}
\begin{tablenotes}
    \item[1] The data is generated with $\theta = (\theta_0^{VP}=0.5, \theta_1^{VP}=1, \theta_2^{VP}=-1,\theta_0^{FC}=0.5,\theta_1^{FC}=1.0,\theta_0^{EC}=1.0,\theta_1^{EC})$.
    \item[2] The first row reports the mean of estimates across 100 Monte Carlo simulations, and the second row reports the standardized mean squared error of the estimates.
    \item[3] The second row of the time column with the HM estimator reports the time used for matrix inversion. The second rows of the AFD estimators reports the total time used to solve the optimal weight(weights). 
\end{tablenotes}
\end{threeparttable}\label{tab:two_step_large}
\end{table}

\subsection{Sequential Estimator for Two-type Finite Mixture Model}

\begin{table}[H]
\centering
\scriptsize
\caption{Mean Estimator and Squared Bias, $X=15552$, with $N = 200, T = 50$}
\begin{threeparttable}
\begin{tabular}{ll cccccccc}
\toprule
& & $\theta^\mathrm{VP0}_m$  & $\theta^\mathrm{VP1}_m$ & $\theta^\mathrm{VP2}_m$& $\theta^\mathrm{FC0}_m$& $\theta^\mathrm{FC1}_m$& $\theta^\mathrm{EC0}_m$  & $\theta^\mathrm{EC1}_m$ & $\alpha_m$  \\ \midrule
$\bs{\theta}^{\mathrm{NPL}}$    & Type 1 & 0.497   & 1.002   & -0.995  & 0.496   & 0.992   & 0.992   & 0.992   & 0.403  \\
 &   & (0.061) & (0.045) & (0.046) & (0.144) & (0.126) & (0.081) & (0.149) &   \\
 & Type 2 & 1.547   & 1.035   & -1.033  & 0.588   & 1.001   & 0.967   & 1.057   & 0.597  \\
 &   & (0.170) & (0.127) & (0.095) & (0.387) & (0.220) & (0.303) & (0.513) &   \\
 & \multicolumn{9}{l}{Time: 4133.250, Iter: 9.150}   \\
 \midrule
$\bs{\theta}^{\mathrm{AFD-SEQ}}_{1}$ & Type 1 & 0.454   & 1.002   & -0.994  & 0.378   & 0.993   & 0.992   & 0.993   & 0.403  \\
 &   & (0.076) & (0.045) & (0.045) & (0.188) & (0.126) & (0.080) & (0.152) &   \\
 & Type 2 & 1.444   & 1.033   & -1.032  & 0.285   & 1.002   & 0.960   & 1.064   & 0.597  \\
 &   & (0.171) & (0.126) & (0.093) & (0.419) & (0.220) & (0.307) & (0.518) &   \\
 & \multicolumn{9}{l}{Time: 651.808, Iter: 17.050, $\rho$: 1.375}   \\
 \midrule
$\bs{\theta}^{\mathrm{AFD-SEQ}}_{2}$ & Type 1 & 0.451   & 1.002   & -0.994  & 0.370   & 0.993   & 0.992   & 0.993   & 0.403  \\
 &   & (0.078) & (0.045) & (0.045) & (0.193) & (0.126) & (0.080) & (0.152) &   \\
 & Type 2 & 1.442   & 1.033   & -1.032  & 0.278   & 1.002   & 0.960   & 1.064   & 0.597  \\
 &   & (0.171) & (0.126) & (0.093) & (0.422) & (0.220) & (0.307) & (0.518) &   \\
 & \multicolumn{9}{l}{Time: 1633.743, Iter: 9.000, $\rho$: 0.369}   \\
 \midrule
$\bs{\theta}^{\mathrm{AFD-SEQ}}_{3}$ & Type 1 & 0.458   & 1.002   & -0.994  & 0.400   & 0.993   & 0.991   & 0.993   & 0.403  \\
 & MBE 1  & (0.074) & (0.045) & (0.046) & (0.175) & (0.126) & (0.080) & (0.152) &   \\
 & Type 2 & 1.459   & 1.033   & -1.032  & 0.350   & 1.002   & 0.961   & 1.062   & 0.597  \\
 & MBE 2  & (0.167) & (0.126) & (0.093) & (0.390) & (0.220) & (0.306) & (0.516) &   \\
 & \multicolumn{9}{l}{Time: 1282.364, Iter: 8.300, $\rho$: 0.084}  \\
\bottomrule
\end{tabular}
\begin{tablenotes}
    \item[1] $nM=200, nT=50$, based on $20$ Monte Carlo simulations. 
    \item[2] The true parameters are $\theta_1 = (\theta_0^{VP}=0.5, \theta_1^{VP}=1, \theta_2^{VP}=-1,\theta_0^{FC}=0.5,\theta_1^{FC}=1.0,\theta_0^{EC}=1.0,\theta_1^{EC})=1.0$, $\theta_2 = (\theta_0^{VP}=1.5, \theta_1^{VP}=1, \theta_2^{VP}=-1,\theta_0^{FC}=0.5,\theta_1^{FC}=1.0,\theta_0^{EC}=1.0,\theta_1^{EC})=1.0$.
\end{tablenotes}
\end{threeparttable}\label{tab:sequential_large_2}
\end{table}

\section{Characterization of Weight Solving}
\subsection{Two periods "almost finite dependence"}\label{sec:weight_2_periods}
It is possible for us to maximize two-period weight simultaneously so that the objective function becomes

\[ \min_{\mathbf{w}_{1}, \mathbf{w}_{2}} \lVert \mathbf{\tilde F} \mathbf{F}(\mathbf{w}_{1}) \mathbf{F}(\mathbf{w}_{2}) \rVert \]
Note that $\sum_{d^{\dagger} \in \mathcal{D} } \mathbf{w}_{1}(d^{\dagger}, i^{\dagger}) f ( i^{\dagger\dagger} | d^{\dagger}, i^{\dagger} ) = \sum_{d^{\dagger} \in \mathcal{D} /\{ 0 \}} \mathbf{w}_{1}(d^{\dagger}, i^{\dagger}) \tilde f ( i^{\dagger\dagger} | d^{\dagger}, i^{\dagger} ) + f(i^{\dagger\dagger} | 0, i^{\dagger}  ) $.
The $(d,i,j)$-th element of the objective function, which is the $(ij)$-th element of $ \mathbf{\tilde F}(d) \mathbf{F}(\mathbf{w}_{1}) \mathbf{F}(\mathbf{w}_{2})$ is 
% \begin{equation}
%\begin{split}
%   [\mathbf{\tilde F}(d) \mathbf{F}(\mathbf{w}_{1}) \mathbf{F}(\mathbf{w}_{2})]_{ij} & = \sum_{i^{\dagger} \in \mathcal{X}} \tilde f(i^{\dagger}|d,i) \sum_{d^{\dagger} \in \mathcal{D} } \sum_{i^{\dagger\dagger} \in \mathcal{X} } \mathbf{w}_{1}(d^{\dagger}, i^{\dagger}) f ( i^{\dagger\dagger} | d^{\dagger}, i^{\dagger} ) \\ 
%   & \sum_{d^{\dagger\dagger } \in \mathcal{D} }  \mathbf{w}_{2}(d^{\dagger\dagger}, i^{\dagger\dagger}) f ( j | d^{\dagger\dagger}, i^{\dagger\dagger} ). 
%\end{split}
% \end{equation}
% Then we organize the $(d,i,j)$-th element is 
\begin{equation}
    \begin{split}
  & [\mathbf{\tilde F}(d) \mathbf{F}(\mathbf{w}_{1}) \mathbf{F}(\mathbf{w}_{2})]_{ij}  \\ & = \sum_{\substack{i^{\dagger} \in \mathcal{X}, \\ i^{\dagger\dagger} \in \mathcal{X}  } } \tilde f(i^{\dagger}|d,i) \underbrace{ \left( \sum_{d^{\dagger} \in \mathcal{D}/\{ 0 \}} \mathrm{w}^{1}(d^{\dagger}, i^{\dagger}) \tilde f ( i^{\dagger\dagger} | d^{\dagger}, i^{\dagger}) + f(i^{\dagger\dagger} | 0, i^{\dagger}  )  \right)}_{ f^{\mathbf{w}_1}( i^{\dagger \dagger} | i^{\dagger} ) }  \\ & \times \underbrace{\left( \sum_{d^{\dagger\dagger }  \in \mathcal{D}/\{ 0 \} }\mathrm{w}^{2}(d^{\dagger\dagger}, i^{\dagger\dagger}) \tilde f ( j | d^{\dagger\dagger}, i^{\dagger\dagger} ) +  f(j | 0, i^{\dagger\dagger}  ) \right)}_{f^{\mathbf{w}_2}( j | i^{\dagger\dagger} ) } \\
  & = (\mathbf{w}_1^{+})\t \mathbf{H}(d,i,j) \mathbf{w}_2^{+} + (\mathbf{w}_1^{+})\t \mathbf{h^1}(d,i,j) + (\mathbf{w}_2^{+})\t \mathbf{h^2}(d,i,j) + \mathbf{h^0}(d,i,j).
    \end{split}
\end{equation}
where $\mathbf{w}_1^{+} = [\mathrm{w}^{1}(1,1)\ldots, \mathrm{w}^{1}(1,X),\ldots, \mathrm{w}^{1}(D,1)\ldots, \mathrm{w}^{1}(D,X)]\t$ and  \\
$\mathbf{w}_2^{+} = [\mathrm{w}^{2}(1,1)\ldots, \mathrm{w}^{2}(1,X),\ldots, \mathrm{w}^{2}(D,1)\ldots, \mathrm{w}^{2}(D,X)]\t$ are $ \big(DX\big) \times 1$ vectors such that for each state $x$, we leave $w(0,x)$ out.
\begin{equation}
    \begin{split}
  \underbrace{\mathbf{H}(d,i,j)}_{ (XD) \times (XD) \text{ matrix } } & = \Big[\tilde f(i^\dagger | d,i) \tilde f (i^{\dagger \dagger }| d^{\dagger},i^{\dagger}) \tilde f(j |d^{\dagger\dagger},i^{\dagger\dagger} ) \Big]_{\big(d^{\dagger},i^{\dagger}\big), \big(d^{\dagger\dagger},  i^{\dagger\dagger}\big) }, \\ 
  \underbrace{\mathbf{h^1}}_{ (XD) \times 1 \text{ row vector } } & = \Big[ f(i^\dagger | d,i) \sum_{i^{\dagger \dagger } \in \mathcal{X}} \tilde f (i^{\dagger \dagger }| d^{\dagger},i^{\dagger})  f(j |0,i^{\dagger\dagger} )  \Big]_{\big(d^{\dagger},i^{\dagger}) \big) } , \\ 
  \underbrace{\mathbf{h^2}}_{ (XD) \times 1 \text{ row vector } } & = \Big[ \sum_{i^{\dagger } \in \mathcal{X}} f(i^\dagger | d,i)    f (i^{\dagger \dagger }| 0,i^{\dagger}) \tilde f(j |d^{\dagger \dagger},i^{\dagger\dagger} )  \Big]_{\big(d^{\dagger\dagger},i^{\dagger\dagger}) \big) }, \\
  \mathrm{h}^0 & = \sum_{\substack{i^{\dagger} \in \mathcal{X},  \\ i^{\dagger\dagger} \in \mathcal{X}  } } f(i^\dagger | d,i)  f (i^{\dagger \dagger }| 0,i^{\dagger}) \tilde f(j | 0,i^{\dagger\dagger} ).
    \end{split}
\end{equation}
Note that a 1-period finite dependence model is equivalent to the condition that that there exists $\mathbf{w}_1^{+}, \mathbf{w}_2^{+}$ such that:
\[ \sum_{\substack{d\in\mathcal{D} /\{0\},\\ i\in\mathcal{X}, \\ j \in\mathcal{X}}} \left( (\mathbf{w}_1^{+})\t \mathbf{H}(d,i,j) \mathbf{w}_2^{+} + (\mathbf{w}_1^{+})\t \mathbf{h^1}(d,i,j) + (\mathbf{w}_2^{+})\t \mathbf{h^2}(d,i,j) + \mathbf{h^0}(d,i,j) \right)^2 = 0.\]
With the above expression, the objective of the minimization is 
\begin{equation*}
    \begin{split}
  \min_{ \mathbf{w}_1^{+}, \mathbf{w}_2^{+} } \lVert \tilde{\mathbf F} \mathbf{F}(\mathbf w^1)  \mathbf{F}(\mathbf w^2) \rVert &  = \min_{ \mathbf{w}_1^{+}, \mathbf{w}_2^{+} }  \sum_{\substack{d\in\mathcal{D} /\{0\},\\ i\in\mathcal{X}, \\ j \in\mathcal{X}}} \Big( (\mathbf{w}_1^{+})\t \mathbf{H}(d,i,j) \mathbf{w}_2^{+} \\ & + (\mathbf{w}_1^{+})\t \mathbf{h^1}(d,i,j) + (\mathbf{w}_2^{+})\t \mathbf{h^2}(d,i,j) + \mathbf{h^0}(d,i,j) \Big)^2 .
    \end{split}
\end{equation*} 
% \subsection{Two periods "almost finite dependence"}\label{sec:weight_2_periods}
% It is possible for us to maximize two-period weight simultaneously so that the objective function becomes

% \[ \min_{\mathbf{w}_{1}, \mathbf{w}_{2}} \lVert \mathbf{\tilde F} \mathbf{F}(\mathbf{w}_{1}) \mathbf{F}(\mathbf{w}_{2}) \rVert \]
% Note that $\sum_{d^{\dagger} \in \mathcal{D} } \mathbf{w}_{1}(d^{\dagger}, i^{\dagger}) f ( i^{\dagger\dagger} | d^{\dagger}, i^{\dagger} ) = \sum_{d^{\dagger} \in \mathcal{D} /\{ 0 \}} \mathbf{w}_{1}(d^{\dagger}, i^{\dagger}) \tilde f ( i^{\dagger\dagger} | d^{\dagger}, i^{\dagger} ) + f(i^{\dagger\dagger} | 0, i^{\dagger}  ) $.
% The $(d,i,j)$-th element of the objective function, which is the $(ij)$-th element of $ \mathbf{\tilde F}(d) \mathbf{F}(\mathbf{w}_{1}) \mathbf{F}(\mathbf{w}_{2})$ is 

% \begin{equation}
%     \begin{split}
%   & [\mathbf{\tilde F}(d) \mathbf{F}(\mathbf{w}_{1}) \mathbf{F}(\mathbf{w}_{2})]_{ij}  \\ & = \sum_{\substack{i^{\dagger} \in \mathcal{X}, \\ i^{\dagger\dagger} \in \mathcal{X}  } } \tilde f(i^{\dagger}|d,i) \underbrace{ \left( \sum_{d^{\dagger} \in \mathcal{D}/\{ 0 \}} \mathrm{w}^{1}(d^{\dagger}, i^{\dagger}) \tilde f ( i^{\dagger\dagger} | d^{\dagger}, i^{\dagger}) + f(i^{\dagger\dagger} | 0, i^{\dagger}  )  \right)}_{ f^{\mathbf{w}_1}( i^{\dagger \dagger} | i^{\dagger} ) }  \\ & \times \underbrace{\left( \sum_{d^{\dagger\dagger }  \in \mathcal{D}/\{ 0 \} }\mathrm{w}^{2}(d^{\dagger\dagger}, i^{\dagger\dagger}) \tilde f ( j | d^{\dagger\dagger}, i^{\dagger\dagger} ) +  f(j | 0, i^{\dagger\dagger}  ) \right)}_{f^{\mathbf{w}_2}( j | i^{\dagger\dagger} ) } \\
%   & = (\mathbf{w}_1^{+})\t \mathbf{H}(d,i,j) \mathbf{w}_2^{+} + (\mathbf{w}_1^{+})\t \mathbf{h^1}(d,i,j) + (\mathbf{w}_2^{+})\t \mathbf{h^2}(d,i,j) + \mathbf{h^0}(d,i,j).
%     \end{split}
% \end{equation}
% where $\mathbf{w}_1^{+} = [\mathrm{w}^{1}(1,1)\ldots, \mathrm{w}^{1}(1,X),\ldots, \mathrm{w}^{1}(D,1)\ldots, \mathrm{w}^{1}(D,X)]\t$ and  \\
% $\mathbf{w}_2^{+} = [\mathrm{w}^{2}(1,1)\ldots, \mathrm{w}^{2}(1,X),\ldots, \mathrm{w}^{2}(D,1)\ldots, \mathrm{w}^{2}(D,X)]\t$ are $ \big(DX\big) \times 1$ vectors such that for each state $x$, we leave $w(0,x)$ out.
% \begin{equation}
%     \begin{split}
%   \underbrace{\mathbf{H}(d,i,j)}_{ (XD) \times (XD) \text{ matrix } } & = \Big[\tilde f(i^\dagger | d,i) \tilde f (i^{\dagger \dagger }| d^{\dagger},i^{\dagger}) \tilde f(j |d^{\dagger\dagger},i^{\dagger\dagger} ) \Big]_{\big(d^{\dagger},i^{\dagger}\big), \big(d^{\dagger\dagger},  i^{\dagger\dagger}\big) }, \\ 
%   \underbrace{\mathbf{h^1}}_{ (XD) \times 1 \text{ row vector } } & = \Big[ f(i^\dagger | d,i) \sum_{i^{\dagger \dagger } \in \mathcal{X}} \tilde f (i^{\dagger \dagger }| d^{\dagger},i^{\dagger})  f(j |0,i^{\dagger\dagger} )  \Big]_{\big(d^{\dagger},i^{\dagger}) \big) } , \\ 
%   \underbrace{\mathbf{h^2}}_{ (XD) \times 1 \text{ row vector } } & = \Big[ \sum_{i^{\dagger } \in \mathcal{X}} f(i^\dagger | d,i)    f (i^{\dagger \dagger }| 0,i^{\dagger}) \tilde f(j |d^{\dagger \dagger},i^{\dagger\dagger} )  \Big]_{\big(d^{\dagger\dagger},i^{\dagger\dagger}) \big) }, \\
%   \mathrm{h}^0 & = \sum_{\substack{i^{\dagger} \in \mathcal{X},  \\ i^{\dagger\dagger} \in \mathcal{X}  } } f(i^\dagger | d,i)  f (i^{\dagger \dagger }| 0,i^{\dagger}) \tilde f(j | 0,i^{\dagger\dagger} ).
%     \end{split}
% \end{equation}
% Note that a 1-period finite dependence model is equivalent to the condition that there exists $\mathbf{w}_1^{+}, \mathbf{w}_2^{+}$ such that:
% \[ \sum_{\substack{d\in\mathcal{D} /\{0\},\\ i\in\mathcal{X}, \\ j \in\mathcal{X}}} \left( (\mathbf{w}_1^{+})\t \mathbf{H}(d,i,j) \mathbf{w}_2^{+} + (\mathbf{w}_1^{+})\t \mathbf{h^1}(d,i,j) + (\mathbf{w}_2^{+})\t \mathbf{h^2}(d,i,j) + \mathbf{h^0}(d,i,j) \right)^2 = 0.\]
% With the above expression, the objective of the minimization is 
% \begin{equation*}
%     \begin{split}
%   \min_{ \mathbf{w}_1^{+}, \mathbf{w}_2^{+} } \lVert \tilde{\mathbf F} \mathbf{F}(\mathbf w^1)  \mathbf{F}(\mathbf w^2) \rVert &  = \min_{ \mathbf{w}_1^{+}, \mathbf{w}_2^{+} }  \sum_{\substack{d\in\mathcal{D} /\{0\},\\ i\in\mathcal{X}, \\ j \in\mathcal{X}}} \Big( (\mathbf{w}_1^{+})\t \mathbf{H}(d,i,j) \mathbf{w}_2^{+} \\ & + (\mathbf{w}_1^{+})\t \mathbf{h^1}(d,i,j) + (\mathbf{w}_2^{+})\t \mathbf{h^2}(d,i,j) + \mathbf{h^0}(d,i,j) \Big)^2 .
%     \end{split}
% \end{equation*} 
\bibliographystyle{apalike}
\bibliography{reference2}